\def\eqref#1{equation~\ref{#1}}
\def\1{\bm{1}}
\DeclareMathAlphabet{\mathsfit}{\encodingdefault}{\sfdefault}{m}{sl}
\SetMathAlphabet{\mathsfit}{bold}{\encodingdefault}{\sfdefault}{bx}{n}
\crefname{section}{Section}{Section}
\crefname{equation}{Eq.}{Eq.}
\crefname{table}{Table}{Table}
\crefname{figure}{Fig.}{Fig.}
\crefname{appendix}{Appendix}{Appendix}
\crefname{theorem}{Theorem}{Theorem}
\crefname{algorithm}{Algorithm}{Algorithm}
\newtheorem{theorem}{Theorem}
\newtheorem{proof}{Proof}
\definecolor{dkgreen}{rgb}{0,0.6,0}
\definecolor{gray}{rgb}{0.5,0.5,0.5}
\definecolor{mauve}{rgb}{0.58,0,0.82}
\definecolor{iclrblue}{rgb}{0.21,0.49,0.74}
\tiny\color{gray},
\let\svthefootnote\thefootnote
\newcommand\freefootnote[1]{%
  \let\thefootnote\relax%
  \footnotetext{#1}%
  \let\thefootnote\svthefootnote%
}
\title{SHE-LoRA: Selective Homomorphic Encryption for Federated Tuning with Heterogeneous LoRA}
\author{
Jianmin Liu$^{1}$,~~
Li Yan$^{1}$\thanks{Li Yan is the corresponding author of this paper.},~~
Borui Li$^{1}$,~~
Lei Yu$^{2}$,~~
Chao Shen$^1$  \\
$^1$Xi'an Jiaotong University~ $^2$Rensselaer Polytechnic Institute \\
\texttt{jianmin.liu@stu.xjtu.edu.cn},~ 
\texttt{li.yan.88@xjtu.edu.cn}, \\
\texttt{boruili@stu.xjtu.edu.cn},~ 
\texttt{yul9@rpi.edu},~
\texttt{chaoshen@mail.xjtu.edu.cn}
}
\begin{document}

\maketitle

\vspace{-2em}
\begin{abstract}
\vspace{-1em}
Federated fine-tuning is critical for improving the performance of large language models (LLMs) in handling domain-specific tasks while keeping training data decentralized and private.
However, prior work has shown that clients' private data can actually be recovered via gradient inversion attacks.
Existing privacy preservation techniques against such attacks typically entail performance degradation and high costs, making them ill-suited for clients with heterogeneous data distributions and device capabilities.
In this paper, we propose SHE-LoRA, which integrates selective homomorphic encryption (SHE) and low-rank adaptation (LoRA) to enable efficient and privacy-preserving federated tuning of LLMs in cross-device environments.
Based on model parameter sensitivity assessment, heterogeneous clients adaptively negotiate and select a subset of model parameters for homomorphic encryption. 
To ensure accurate model aggregation, we design a column-aware secure aggregation method and customized reparameterization techniques to align the aggregation results with the heterogeneous device capabilities of clients.
Extensive experiments demonstrate that SHE-LoRA maintains performance comparable to non-private baselines, achieves strong resistance to state-of-the-art attacks, and significantly reduces communication overhead by 99.71\% and encryption time by 99.87\%, compared to HE baselines.
\freefootnote{
Code is publicly available at \url{https://github.com/liyan2015/SHE-LoRA}.
}

\end{abstract}
\vspace{-1em}
\section{Introduction}\label{sec:intro}
\vspace{-0.8em}

Large language models (LLMs) have excelled in various tasks, but their deployment in domain-specific applications (e.g., healthcare, finance) often requires private, user-generated data \citep{durante2024agent,huang2024position}.
However, stringent privacy preservation regulations like GDPR \citep{voigt2017eu} pose significant barriers to centralized fine-tuning on such data.
To address this, federated learning (FL) emerged as a promising solution by enabling decentralized parameter-efficient fine-tuning (PEFT) of LLMs without exposing raw data \citep{zhang2024fedit}.

Among various PEFT techniques, Low-Rank Adaptation (LoRA) stands out due to its high efficiency and model quality.
It reparameterizes the weight matrix $\mathbf{W}\in\mathbb{R}^{m\times n}$ as $\mathbf{W}=\mathbf{W_0}+\Delta \mathbf{W}=\mathbf{W_0}+\mathbf{BA}$, where $\mathbf{W_0}\in\mathbb{R}^{m\times n}$ represents the frozen pre-trained parameters, and $\mathbf{B}\in\mathbb{R}^{m\times r}$ and $\mathbf{A}\in\mathbb{R}^{r\times n}$ are the two low-rank adapter matrices
to be learned. Given that the rank $r \ll \min(m,n)$, LoRA significantly reduces both computation and communication costs in federated PEFT.
However, recent works \citep{petrov2024dager,balunovic2022lamp} have shown that the parameters or gradients transmitted during federated PEFT can be exploited via \emph{inversion attacks} to reconstruct private training data, highlighting the need for stronger privacy protection in federated PEFT with LoRA.

Many privacy-preserving techniques have been proposed to mitigate privacy leakage risks in FL, including differential privacy (DP) \citep{sun2024improving,yu2022differentially,zhu2025deer}, secure multi-party computation (MPC) \citep{mugunthan2019smpai,kanagavelu2020two,zheng2024optimizing}), and homomorphic encryption (HE) \citep{han2023adaptive,jin2023fedml,hu2024maskcrypt}). 
DP ensures formal privacy guarantees by perturbing data or model updates with random noise. 
However, in LoRA-based settings that involve the multiplication of $\mathbf{A}$ and $\mathbf{B}$, this noise becomes amplified through the multiplication, often hindering convergence and degrading model performance \citep{sun2024improving}.
In contrast, cryptographic approaches such as MPC and HE can achieve higher accuracy. 
MPC-based secure aggregation employs techniques like garbled circuits and secret sharing to securely compute PEFT updates.
Nevertheless, it often requires intricately designed computation and synchronization protocols, making it less practical for FL with heterogeneous data and device capabilities \citep{kairouz2021advances,2020flchallenges}.
Selective HE (SHE) \citep{han2023adaptive,jin2023fedml,hu2024maskcrypt} offers a compelling alternative by encrypting only sensitive parameters and allowing computation over ciphertexts, delivering strong privacy guarantees with low HE costs and preserving accuracy for privacy-preserving federated PEFT.

However, existing SHE methods struggle to balance privacy and efficiency in cross-device federated PEFT with LoRA, particularly under Non-IID (Non-Independently Identically Distributed) data and heterogeneous device capabilities.
As discussed in~\cref{subsec:motivation}, two observations highlight these challenges: 1) LoRA matrix multiplication makes $\Delta \mathbf{W}$ denser, which may increase the number of parameters requiring encryption, and 2) heterogeneous clients produce different encrypted parameter positions, whose union during aggregation expands the encrypted set and inflates HE costs.
Driven by these limitations, we aim to adaptively balance security and HE overhead per client in cross-device federated PEFT with LoRA. 
Achieving this goal requires addressing the following key challenges:

\begin{itemize}[
fullwidth,
itemindent=0em,
topsep=0pt,
noitemsep,
]
\item \textbf{How to adaptively apply SHE across heterogeneous clients?}
Algorithms like FedAvg~\citep{mcmahan2017fedavg} are not directly applicable to LoRA-based PEFT. 
Naively applying LoRA requires all clients to use the same low-rank configuration, which is impractical for devices with heterogeneous capabilities. 
Furthermore, separately aggregating the adapter matrices ($\mathbf{A}$ and $\mathbf{B}$) is not mathematically equivalent to full-weight aggregation (\cref{subsec:motivation}). 
Heterogeneous hardware also prevents clients from encrypting updates of the same size, since clients with limited resources can encrypt only a small subset of parameters while stronger devices may encrypt more. This mismatch disrupts aggregation and complicates reconstruction of low-rank matrices from encrypted weights.
Thus, a new aggregation algorithm that is efficient, accurate, and compatible with SHE is needed.

\item \textbf{How to avoid expansion of encrypted subsets under SHE?}
In heterogeneous settings, clients may independently encrypt arbitrary positions in their model parameter matrix,
inflating ciphertext size during aggregation. 
Moreover, mixing plaintext and ciphertext matrices introduces structural disorder, making aggregation inefficient. 
Without coordinated negotiation of encryption positions, both ciphertext size and aggregation overhead can increase substantially.
\end{itemize}

To address these challenges, we propose SHE-LoRA, which integrates SHE and LoRA to enable efficient and privacy-preserving federated PEFT in cross-device environments.
Specifically,
\begin{itemize}[
fullwidth,
topsep=0pt,
      itemindent=0em,
      noitemsep,
]

    \item We devise a HE subset negotiation mechanism that tailors model-parameter encryption to each client’s capabilities.
    Each client assesses its model parameter importance and selects an affordable subset for HE based on its resource constraints and privacy needs.
    This subset is encoded using order-preserving encryption (OPE) and sent to a server, which then negotiates a global HE subset to optimally balance privacy and HE overhead across heterogeneous clients.

    \item We introduce a selective parameter encryption method based on column-swapping parameter obfuscation, which clusters unencrypted and encrypted parameters separately, enabling efficient matrix operations on plaintexts and batch encryption of ciphertexts. 
    Moreover, obfuscating encrypted parameter positions increases adversarial uncertainty and mitigates privacy leakage.

    \item We propose a column-aware adaptive aggregation method, which aligns encrypted columns across clients for efficient and accurate aggregation of adapter matrices and subsequent reparameterization to recover LoRA parameters without losing meaningful model updates.

    \item Experiments on clients and LLMs with varying scales
    demonstrate that SHE-LoRA provides strong resistance to state-of-the-art (SOTA) attacks
    while maintaining model performance comparable to non-private baselines. Compared to HE baselines, SHE-LoRA reduces communication overhead by up to 99.71\% and HE overhead by 99.87\%.

\end{itemize}

\vspace{-0.8em}
\section{Preliminaries and Motivations}
\vspace{-0.8em}
\subsection{Definition of Parameter Sensitivity}
\vspace{-0.6em}

Inspired by model pruning, which removes unimportant model parameters to reduce model size while maintaining performance, SHE identifies and selectively encrypts the most important parameters.
Specifically, given model parameters $\mathbf{W}$, let $\mathcal{L}(\mathbf{W})$ denote the loss function.
For a subset of the model parameters $\mathbf{w} \in \mathbf{W}$, and the model parameters with $\mathbf{w}$ zeroed out (denoted as $\mathbf{W}_{-\mathbf{w}}$), the sensitivity of $\mathbf{w}$ 
is defined as the change in loss when $\mathbf{w}$ is zeroed out: 

\vspace{-0.2in}
\begin{equation}
\small
    \Omega (\mathbf{w}) = |\mathcal{L}(\mathbf{W})-\mathcal{L}(\mathbf{W}_{-\mathbf{w}})|.
    \label{eq:sensitivity}
\vspace{-0.1in}
\end{equation}

\cref{eq:sensitivity} implies that a larger loss change upon removing $\mathbf{w}$ indicates higher sensitivity of $\mathbf{w}$. 
Thus, $\Omega(\mathbf{w})$ reflects not only importance, but also the potential privacy leakage risk associated with exposing $\mathbf{w}$.
However, directly computing $\Omega(\mathbf{w})$ for all parameters is computationally infeasible. 
Taylor approximation-based estimation ~\citep{frantar2023gptq}
requires gradient computation, which also incurs significant overhead, especially in LLMs, where high dimensionality and outlier activations further exacerbate the cost~\citep{sun2024massive}.
To mitigate this, we adopt Wanda~\citep{sun2024wanda} to estimate the sensitivity of a parameter $W_{ij}$ in the $i$-th row and $j$-th column as:

\vspace{-0.2in}
\begin{equation}
\small
    \Omega (W_{ij})=|W_{ij}| \cdot\| x_{j}\|_{2},
    \label{eq:wanda}
\end{equation}
where $| \cdot|$ is the absolute value operator, $\| x_{j}\|_{2}$ is the $l_2$ norm of the $j$-th features in the input data $\mathbf{X}$. 
Since both $\mathbf{W}$ and $\mathbf{X}$ are directly accessible, Wanda estimates parameter sensitivity with only a single forward pass.
See~\cref{appd:rationale} for details on the link between high-sensitivity parameters and privacy risk.

\vspace{-0.6em}
\subsection{Privacy Leakage Quantification}\label{subsec:mi_defination}
\vspace{-0.6em}

We leverage mutual information to quantify privacy leakage caused by the selective encryption of $\mathbf{w}$. 
Specifically, we assume that once encrypted with HE, $\mathbf{w}$ does not leak any privacy information. 
While for the accessible plaintext portion of the model parameters (i.e., $\mathbf{W}_{-\mathbf{w}}$), we measure the mutual information shared between $\mathbf{W}_{-\mathbf{w}}$ and $\mathbf{W}$ as:

\vspace{-0.2in}
\begin{equation}
\small
    I(\mathbf{W}; \mathbf{W}_{-\mathbf{w}}) = 
\sum_{y \in \mathbf{W}_{-\mathbf{w}} } \sum_{x \in \mathbf{W}} p(x, y) \log_2 \frac{p(x, y)}{p(x) p(y)},
    \label{eq:mi}
\end{equation}
where $p(x,y)$ is the joint probability distribution, $p(x)$ and $p(y)$ are the marginal distributions of $\mathbf{W}$ and $\mathbf{W}_{-\mathbf{w}}$, respectively.
$I(\mathbf{W}; \mathbf{W}_{-\mathbf{w}})$ quantifies the extent of privacy leakage attributable to the unencrypted model parameters.
As the value of $I(\mathbf{W}; \mathbf{W}_{-\mathbf{w}})$ increases, the risk of privacy leakage due to the selective encryption of $\mathbf{w}$ also increases.
See \cref{appendix:kde_matual_information} for implementation details.

\vspace{-0.6em}
\subsection{Threat Model}\label{subsec:threat_model}
\vspace{-0.6em}

Following prior works~\citep{jin2023fedml,kiani2025differentially}, we consider a semi-honest adversary $\mathcal{A}$ that may compromise the aggregation server or a subset of clients.
While $\mathcal{A}$ follows the training protocol, it passively attempts to infer client data from observed model updates. 
We assume that
1) when $\mathcal{A}$ compromises a subset of clients, it can only infer private information from the clients' local models;
2) when $\mathcal{A}$ compromises the aggregation server, it can only infer private information from unencrypted parameters;
3) when both the aggregation server and a subset of clients are compromised, $\mathcal{A}$ can access the private key (shared among all clients) to decrypt model updates sent from benign clients, 
which can be addressed via multi-party HE techniques such as 
multi-key HE, proxy re-encryption, etc.
The multi-party HE techniques and protection against other malicious behaviors (e.g., poisoning, backdoor attacks) are not the focus of this work, and we refer to existing defenses~\citep{zheng2022aggregation,QueyrutSF23} and possible extensions in \cref{appendix:key_management} as future endeavors.

\begin{figure}[t!] 
  \centering
  \vspace{-0.2in}
  \begin{minipage}{0.48\textwidth} 
    \centering
    \includegraphics[width=\linewidth]{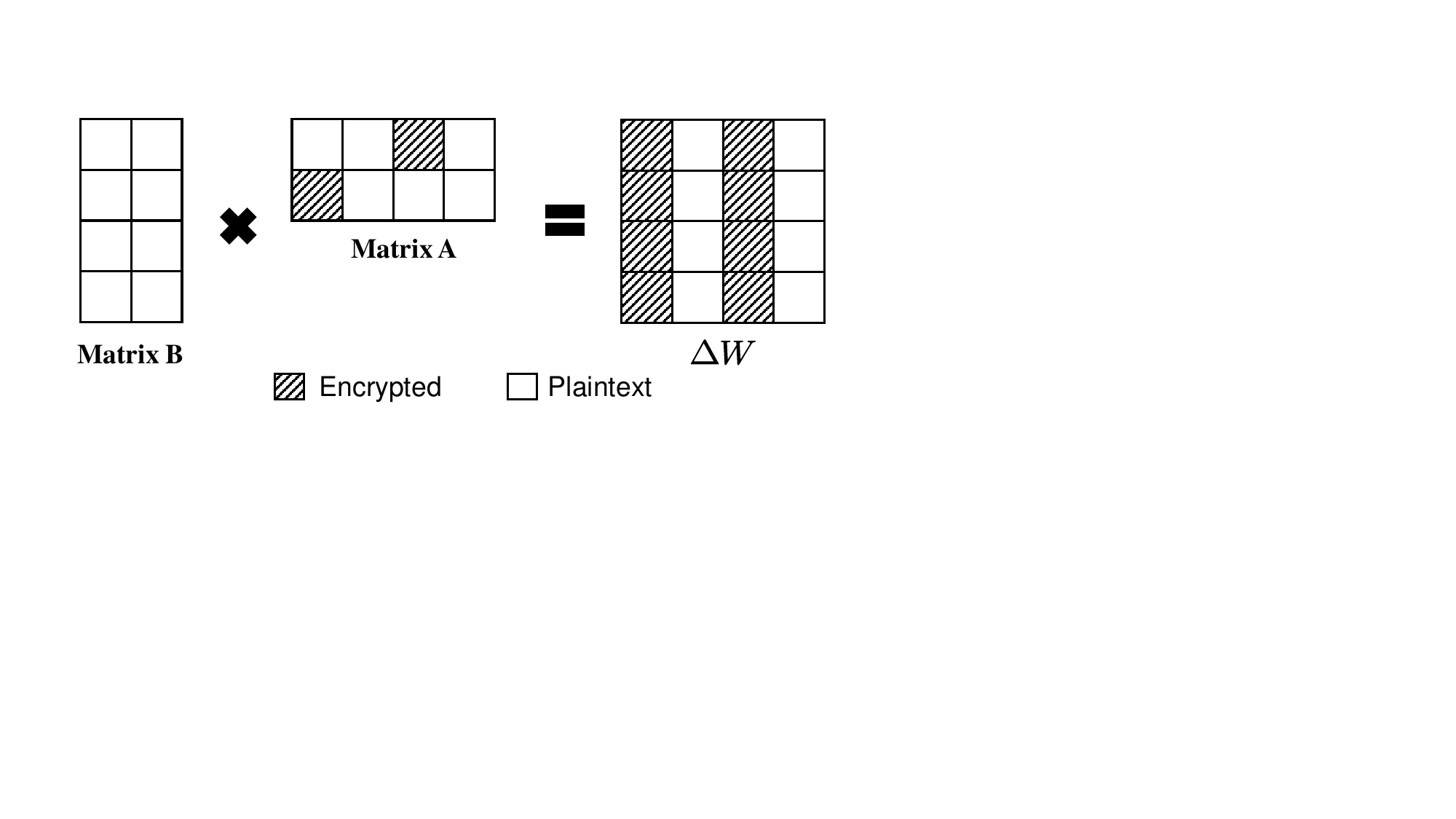}
    \vspace{-0.25in}
    \caption{Expansion of encryption positions.}   
    \label{fig:mm_he}
  \end{minipage}
  \hfill 
  \begin{minipage}{0.48\textwidth} 
    \centering
    \includegraphics[width=\linewidth]{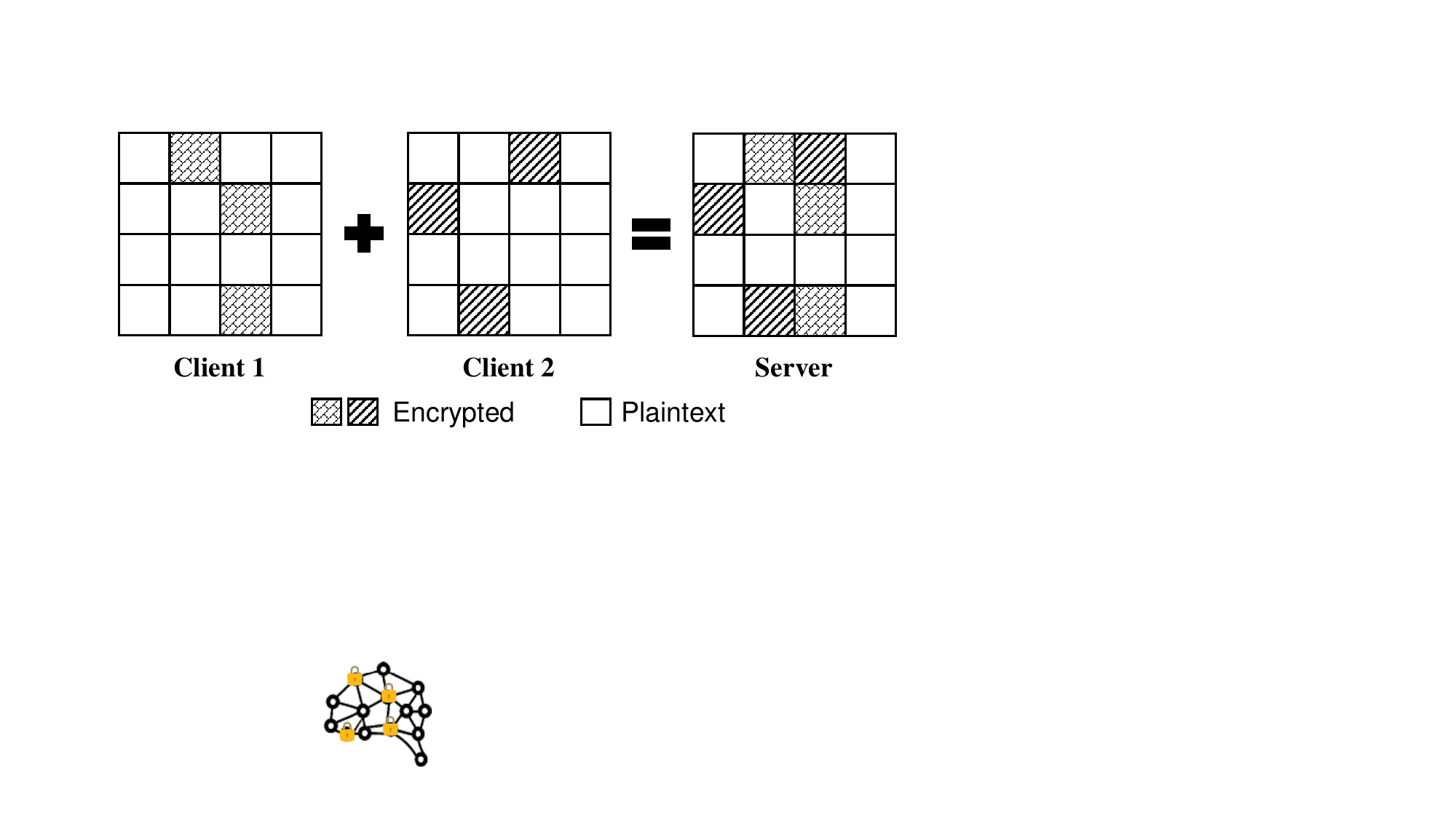}
     \vspace{-0.25in}
    \caption{Inflation of ciphertext size.}   
    \label{fig:add_he}
  \end{minipage}
\end{figure}

\vspace{-0.6em}
\subsection{Motivations}\label{subsec:motivation}
\vspace{-0.6em}

\textbf{Naive averaging of LoRAs leads to mathematical errors.} 
Popular federated LoRA methods \citep{zhang2024fedit,yan2024federa,meng2024pissa,babakniya2023slora}
require clients to possess homogeneous LoRA ranks, and aggregate 
$\mathbf{A}$ and $\mathbf{B}$ separately across clients (i.e., $\text{server side}=\sum \mathbf{B} \times \sum \mathbf{A}$, $r_A = r_B$). 
However, this introduces inconsistency in global model updates, as the aggregation of LoRA updates 
(i.e., $\sum (\mathbf{B} \times \mathbf{A})$) is intrinsically unequal to $\sum \mathbf{B} \times \sum \mathbf{A}$, which will degrade model performance.
Moreover, separately aggregating the LoRA matrices requires all clients to use the same rank, which is unrealistic for cross-device federated LoRA. 
Thus, these naive approaches are inapplicable to heterogeneous LoRA settings.

\textbf{Matrix multiplication expands encryption positions.}\label{motivation:he}
Albeit the strong privacy guarantee of HE, applying HE per parameter is computationally and communicatively expensive.
Although existing SHE methods like FedML-HE~\citep{jin2023fedml} and MaskCrypt~\citep{hu2024maskcrypt} reduce HE overhead by selectively encrypting a subset of model parameters with masking, the matrix multiplication of LoRA will lead to an expanded HE subset as shown in~\cref{fig:mm_he}, which significantly impairs the cost-saving performance of SHE for federated LoRA.

\textbf{Matrices from heterogeneous clients inflate ciphertext size.}
Our experiments show that clients with heterogeneous hardwares and Non-IID data tend to focus on different sensitive model parameters during fine-tuning. 
As a result, the positions selected for encryption vary across clients. 
In the aggregation phase of existing SHE methods, if a client encrypts a specific model parameter, the parameter corresponding to the same position must remain encrypted in the global model as well, leading to inflated ciphertext size as the number of clients grows, which is illustrated in \cref{fig:add_he}.

\vspace{-0.8em}
\section{Method}\label{sec:method} 
\vspace{-0.8em}

In order to adaptively balance security and HE overhead per client in cross-device federated LoRA, encrypting $\mathbf{A}$ offers a cost-effective solution. 
Since $\mathbf{A}$ directly operates on user data, it is more vulnerable to inversion attacks~\citep{petrov2024dager}, making its protection essential for preventing privacy leakage.
Following this rationale, the diagram of SHE-LoRA is as illustrated in~\cref{fig:main}.

\begin{figure}[htbp]
    \vspace{-0.15in}
    \centering
    \includegraphics[width=\linewidth]{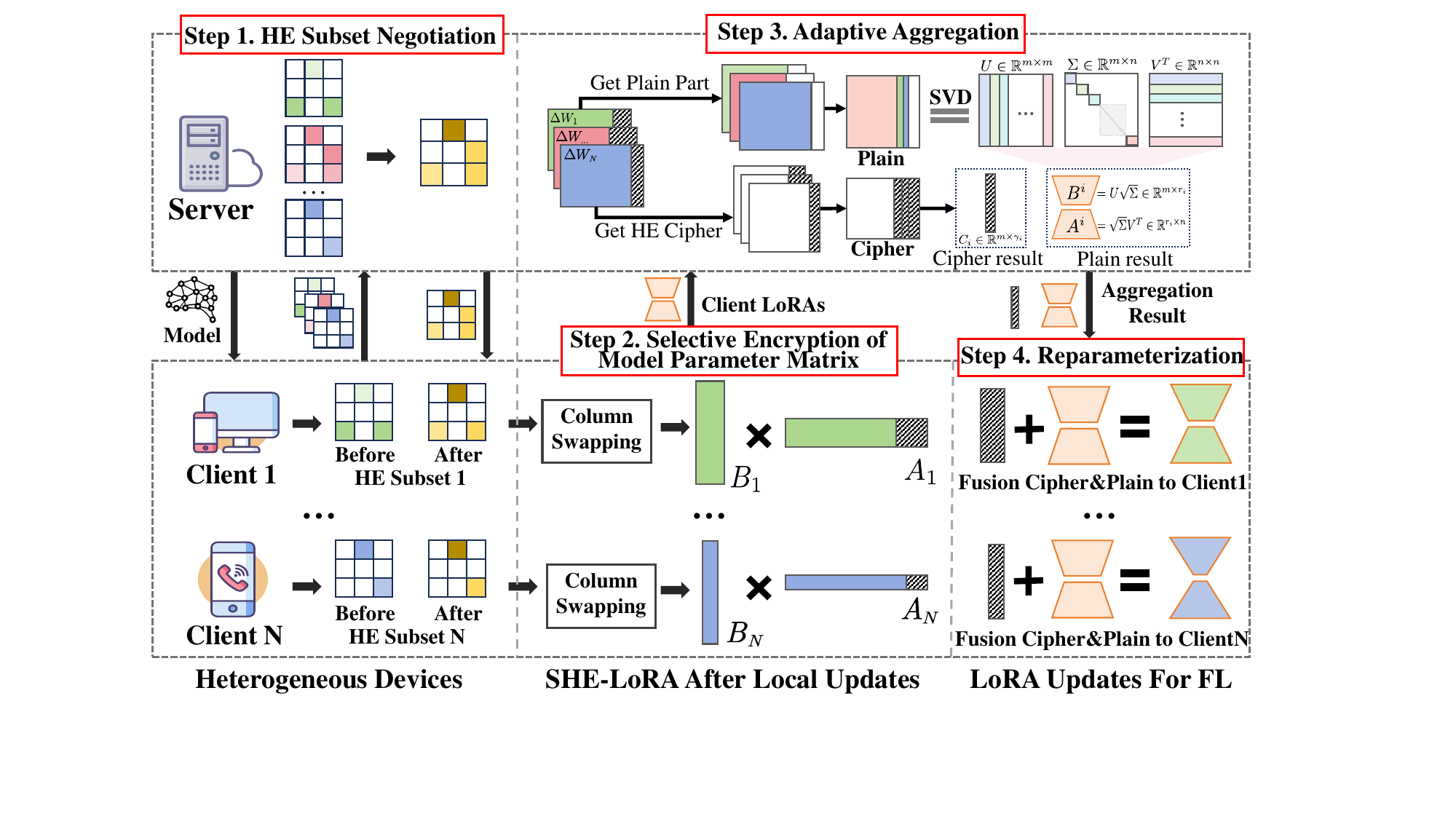}
    \vspace{-0.28in}
    \caption{The workflow of SHE-LoRA.}
    \label{fig:main}
    \vspace{-0.15in}
\end{figure}

SHE-LoRA consists of the following components: 
\textbf{Step 1. HE Subset Negotiation}. 
    Based on the definition of model parameter sensitivity in \cref{eq:wanda}, each client assesses and transmits its model parameter importance to a server.
    Then, the server negotiates a global HE subset and feeds it back to clients.
\textbf{Step 2. Selective Encryption of Model Parameter Matrix}.
    Based on the global subset, clients perform column swapping to separately cluster unencrypted and encrypted parameters, enabling efficient matrix operations on plaintexts, batch encryption of ciphertexts, and parameter position obfuscation that enhances privacy protection and HE efficiency.
\textbf{Step 3. Adaptive Aggregation}.
    The server performs adaptive, column-aware aggregation of the clients' unencrypted and encrypted parameters, respectively, enabling efficient and accurate aggregation of adapter matrices. 
\textbf{Step 4. Reparameterization}.
    Each client reparameterizes the aggregated plaintext and ciphertext results into LoRA parameters, matching its local rank for the next round of model tuning.

\vspace{-0.6em}
\subsection{HE Subset Negotiation}
\vspace{-0.6em}
\subsubsection{Assessment of Model Parameter Importance} \label{subsec:importance}
\vspace{-0.6em}

\cref{fig:importance} shows the sensitivity of parameters measured by \cref{eq:wanda}, where darker color indicates higher importance. 
We find that the sensitivity values generally differ by columns, 
suggesting a strong correlation with specific data channels.
Considering that encrypting even a single element in a column will result in the expansion of encryption positions for that entire column due to matrix multiplication (\cref{motivation:he}), and vectorized encryption by columns is beneficial to improving HE efficiency \citep{cheon2017ckks}, we let each client assess parameter importance by columns and determine its HE subset of columns based on its encryption budget.
\begin{wrapfigure}{r}{0.45\textwidth}
    \centering
    \vspace{-0.1in}
    \includegraphics[width=0.45\textwidth]{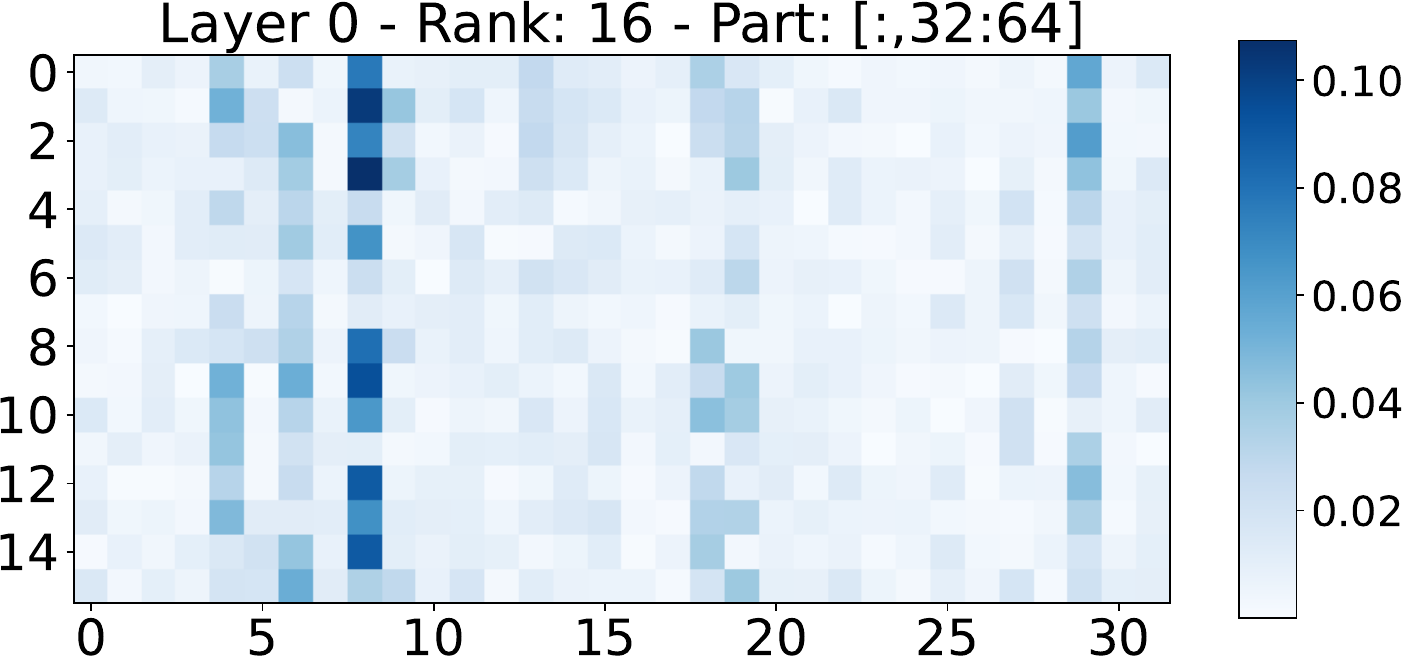}
    \vspace{-0.2in}
    \caption{Sensitivity of model parameters.}
    \label{fig:importance}
    \vspace{-0.1in}
\end{wrapfigure}
Specifically, the encryption budget of Client $i$ is defined as the ratio of parameters for SHE (denoted as $\gamma_{i} \in [0,1]$), which is specified according to its hardware capabilities such as CPU clock speed, etc.
For an adapter matrix $\mathbf{A}\in\mathbb{R}^{r\times n}$ and input $\mathbf{X} \in\mathbb{R}^{L\times n}$, we use $S_{j}=\sum_{k=0}^{r}|\mathbf{W}_{kj}| \cdot\| x_{j}\|_{2}$ to calculate the importance of the $j$-th channel, where $\|x_j\|_2$ is the $l_2$ norm of the $j$-th feature $x_j \in \mathbb{R}^{L}$.
Thus, the proposed approach can not only provide channel-level privacy protection, but also prevent unnecessary expansion of encryption positions.

\vspace{-0.6em}
\subsubsection{HE Subset Negotiation}\label{subsec:negotiation}
\vspace{-0.6em}

As explained in \cref{motivation:he}, clients with heterogeneous data distributions and device capabilities will select different positions and amounts of important model parameters for encryption according to their individual budgets.
As the union of HE subsets expands with the aggregation of client model updates, the ciphertext size may inflate significantly.
To address this, we devise a HE subset negotiation mechanism to tailor client-specific encryption of model parameters, ensuring that the overall ciphertext size remains affordable per client.

To prevent the server from snooping on the positions of important model parameters and conducting targeted attacks, we apply OPE to hide each client's HE subset. 
Since SHE-LoRA is modular, OPE can be replaced by alternatives like order-revealing encryption, secure multi-party computation, or a trusted third party whenever stronger order privacy is required.
As OPE only preserves numerical ordering of the plaintext, the server cannot obtain any information from the cipher other than the plaintext order. 
Specifically, Client $i$ first encrypts a tuple $(G_i, S_i)$ with OPE and sends it to the server, where $G_i$ is the set of columns that needs HE, and $S_i$ is their sensitivities.
Then, the server maintains two shared lists based on all the clients' tuples: the \textit{Common} list, which sorts all columns in $\bigcup_i G_i$ from most to least frequently deemed as sensitive, and the \textit{Sensitivity} list, which sorts all columns in $\bigcup_i G_i$ from highest to lowest sensitivity.
Finally, by taking into account both the overlap of important HE subsets across clients (reflected in \textit{Common} and \textit{Sensitivity}) and the preferred HE subsets of individual clients (reflected in $G_i$), the server negotiates a global HE subset affordable for each client, of which algorithmic details are elaborated in 
\cref{appendix:neo_algo}.
As a result, the negotiated global HE subset optimally balances  privacy and HE overhead per client.

\subsection{Selective Encryption of Model Parameter Matrix}\label{sec:sub-encrypt}

\begin{wrapfigure}{r}{0.35\textwidth}
\vspace{-0.2in}
\includegraphics[width=0.35\textwidth]{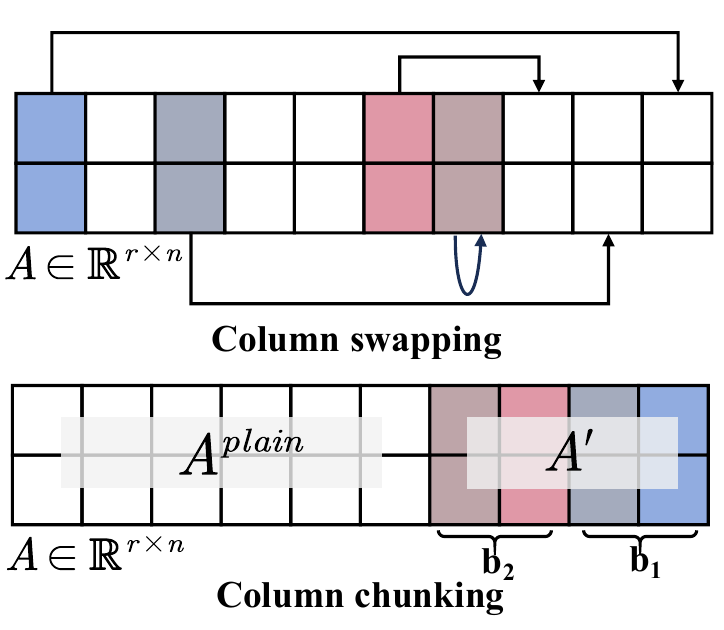}
\vspace{-2em}
\caption{Selective encryption of columns.}
\label{fig:encryption} 
\vspace{-2em}
\end{wrapfigure}

As illustrated in the top of \cref{fig:encryption}, the selected columns for encryption may be scattered across the parameter matrix. 
This irregular distribution increases the complexity of matrix batching and the overhead of encryption, decryption and computation. 
To address this, based on the negotiated HE subset, we propose a column-swapping method to separately cluster the columns pending for encryption and those remain unencrypted. 
This approach brings three key benefits: (1) encrypted columns are clustered together, allowing for efficient batch encryption with reduced storage and communication overhead; (2) the clustered unencrypted columns can be directly used in matrix operations, improving computational efficiency; and (3) the column-wise obfuscation increases the difficulty of potential privacy attacks.

After swapping, Client $i$ selectively encrypts its parameter matrix, and uses the last $k_i= \lfloor n \times \gamma_i \rfloor$ columns as its HE subset, which is affordable according to the client's encryption budget.
Thus, the tensor to be encrypted, denoted as $\mathbf{A}^\prime$, has the shape of $(r,k_i)$.
In HE, the CKKS (Cheon-Kim-Kim-Song) scheme \citep{cheon2017ckks} serves as an encryption technique that facilitates approximate floating-point arithmetic, making it particularly advantageous for the encryption of matrices or tensors. 
For large-scale tensors, the limited capacity of HE requires them to be partitioned into blocks, each encrypted separately.
As illustrated in the bottom of \cref{fig:encryption}, given a block size of \textit{chunk}, $\mathbf{A}^\prime$ can be divided into $N^b = \lceil k_i/chunk\rceil$ blocks by column (denoted as $\mathbf{A}^\prime=\{b_{1}, \dots, b_j,\dots, b_{N^b}\}$),
where each block contains a tensor that has the shape of $(r,chunk)$. 
For each block $b_j$, the client applies CKKS encryption to obtain its ciphertext $C_j=\text{CKKS}(b_j,pk)$, where $pk$ is the public HE key.
After all blocks are encrypted, the complete list of ciphertext blocks $\{C_j\}^{N^b}$ is sent to the server.

\vspace{-0.6em}
\subsection{Adaptive Aggregation}\label{subsec:adaptive_agg}
\vspace{-0.6em}

To prevent the inflation of ciphertext size caused by aggregating heterogeneous HE subsets as illustrated in \cref{subsec:motivation}, we propose an adaptive column-aware aggregation method to aggregate the unencrypted and encrypted parts separately.

\textbf{Aggregation of Unencrypted Model Parameters:}
Upon receiving the adapter matrix $\mathbf{B}_i$ and the unencrypted part of $\mathbf{A}_i$ (denoted as $\mathbf{A}_{i}^\text{plain}$) from Client $i$, the server calculates the unencrypted weight update from Client $i$ as $\Delta \mathbf{W}_{i}^\text{plain}=\mathbf{B}_i \mathbf{A}_{i}^\text{plain}$.
However, as the number of encrypted columns (i.e., $k_i$) differs across the clients due to their diverse encryption budgets, the shape of $\Delta \mathbf{W}_{i}^\text{plain}\in \mathbb{R}^{m\times (n-k_i)}$ also varies across the clients, which makes the traditional aggregation methods of weight-averaging $\Delta \mathbf{W}_{i}^\text{plain}$ or $\mathbf{A}_{i}^\text{plain}$ inapplicable.
Considering that the unencrypted columns of all the clients have been clustered to the left as explained in \cref{sec:sub-encrypt}, we let the server apply column-wise weighted averaging to aggregate the unencrypted model parameters as illustrated in the top of \cref{fig:aggregation}.
This column-wise partial aggregation is consistent with standard FedAvg~\citep{mcmahan2017fedavg} under client subsampling and ensures no bias is introduced from non-contributing clients.
The detailed algorithm is elaborated in \cref{appendix:plain_algo}.

\begin{figure}[htbp]
    \centering
    \vspace{-0.15in}
    \includegraphics[width=0.7\linewidth]{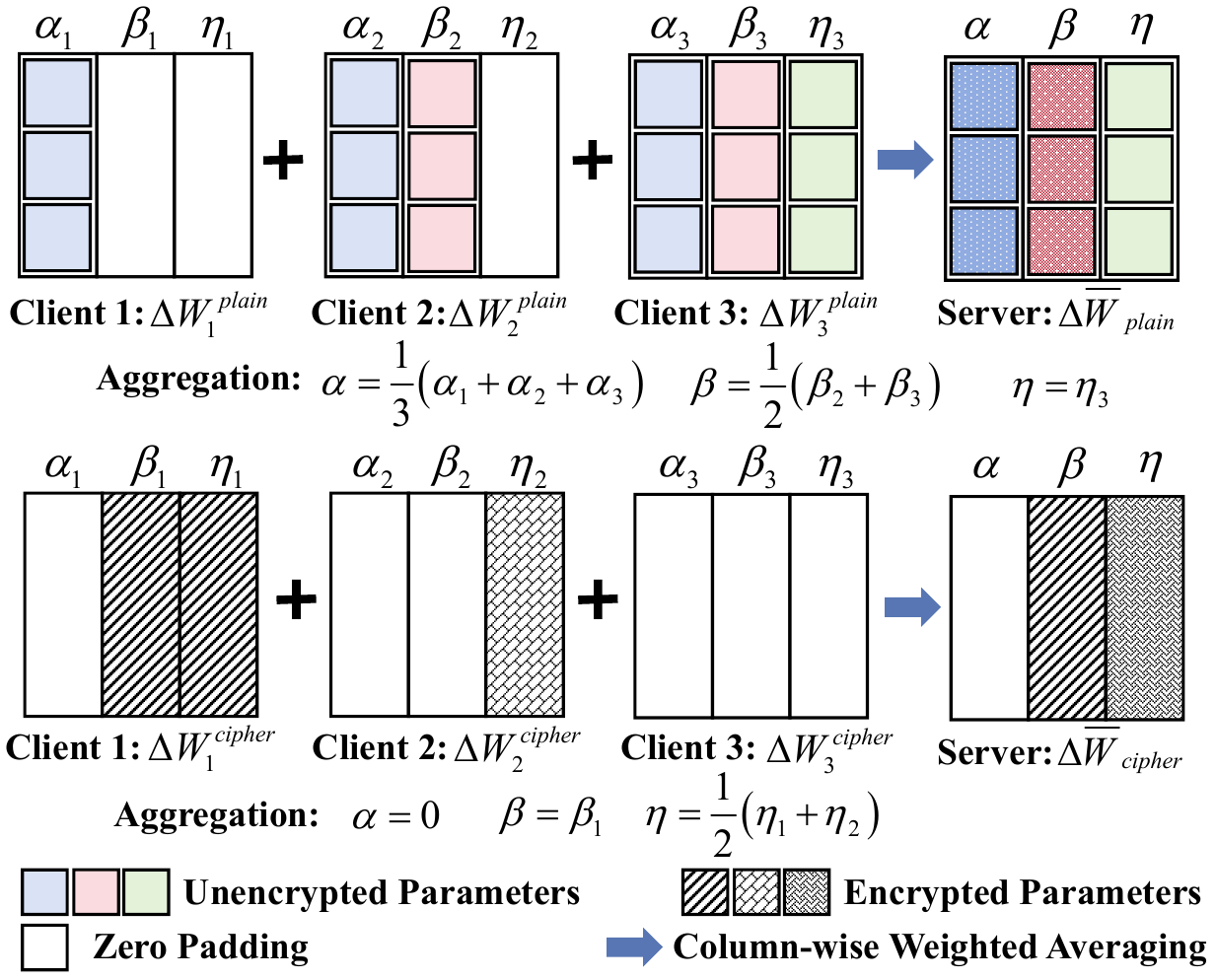}
    \vspace{-.5em}
    \caption{Aggregation of unencrypted (top) and encrypted (bottom) model parameters.}
    \label{fig:aggregation} 
    \vspace{-1em}
\end{figure}

\textbf{Aggregation of Encrypted Model Parameters:}
Similarly, upon receiving $\mathbf{B}_i$ and the encrypted part of $\mathbf{A}_i$ (denoted as $\mathbf{A}_{i}^\text{cipher}=\{C_j\}^{N^b}$) from Client $i$, the server calculates the encrypted weight update from Client $i$ as $\Delta \mathbf{W}_{i}^\text{cipher}=\mathbf{B}_i \mathbf{A}_{i}^\text{cipher}\in \mathbb{R}^{m \times k_i}$.
Although the difference in $k_i$ also causes $\Delta \mathbf{W}_{i}^\text{cipher}$ to have different shapes, the cipher block order is consistent across the clients, and the encrypted columns of all the clients have been clustered to the right as explained in \cref{sec:sub-encrypt}.
Thus, the encrypted model parameters can be similarly aggregated via column-wise weighted averaging as illustrated in the bottom of \cref{fig:aggregation}.
The detailed algorithm is elaborated in \cref{appendix:cipher_algo}.

To reduce communication overhead, the server sends the unencrypted and encrypted parts separately:

\begin{itemize}[
fullwidth,
topsep=0pt,
      itemindent=0em,
      noitemsep,
]

\item For the unencrypted part, since most model parameters remain unencrypted during LoRA, we let the server, which generally has more computation capability, apply singular value decomposition (SVD) to decompose the aggregation result: $\Delta \overline{\mathbf{W}}_\text{plain} \overset{\text{SVD}}{=} \mathbf{U_{p}\Sigma_{p} V^\top_{p}}\in \mathbb{R}^{m\times K}$, where $K=n-\min(k_i)$, $\mathbf{U_{p}}\in\mathbb{R}^{m\times m}$, $\mathbf{\Sigma_{p}}\in\mathbb{R}^{m\times K}$ and $\mathbf{V^\top_{p}}\in\mathbb{R}^{K\times K}$. 
Then, given each client's rank $r_{i}$, the server slices the decomposition results as: $\mathbf{U_{p}}[:,:r_{i}]\in\mathbb{R}^{m\times r_{i}}$, $\mathbf{\Sigma_{p}}[:r_{i},:r_{i}]\in\mathbb{R}^{r_{i}\times r_{i}}$ and $\mathbf{V^\top_{p}}[:r_{i},:]\in\mathbb{R}^{r_{i}\times K}$. 
Thus, each client receives the unencrypted aggregation result
corresponding to its own rank.

\item For the encrypted part, since each client only selectively encrypts a small portion of model parameters, we let the server first truncate the aggregation result by the encryption budget of respective clients, and then let the clients decrypt their corresponding aggregation results as in \cref{sec:repara}.

\end{itemize}

\vspace{-0.6em}
\subsection{Reparameterization} \label{sec:repara}
\vspace{-0.6em}
To enable the next round of model tuning, each client needs to merge the plaintext and ciphertext results returned by the server into new LoRA parameters.
Suppose the client receives the plaintext decomposition result: $\Delta \overline{\mathbf{W}}_\text{plain}\overset{\text{SVD}}{=} \mathbf{B_p A_p}$, where $\mathbf{B_p=U_{p}\sqrt{\Sigma_{p}}}\in \mathbb{R}^{m\times r}$ and $\mathbf{A_p=\sqrt{\Sigma_{p}}V^\top_{p}}\in \mathbb{R}^{r\times K}$, and the cipher blocks from the server. 
For the plaintext decomposition results, the client directly zero-pads $\mathbf{B_p}$ and $\mathbf{A_p}$ to the shapes of $(m, r)$ and $(r, n)$, respectively, for further update of its model parameter matrix.
For the cipher blocks, the client first decrypts the ciphertext into plaintext, and then performs SVD and zero-padding to obtain the decomposed results as $\Delta \overline{\mathbf{W}}_\text{cipher}\overset{\text{SVD}}{=} \mathbf{B_c A_c}$, where $\mathbf{B_c = U_{c}\sqrt{\Sigma_{c}}}$ and $\mathbf{A_c = \sqrt{\Sigma_{c}} V^\top_{c}}$.
By merging the decomposition results of the plaintext and ciphertext via \cref{eq:fusion_plain_cipher} and restoring the position of the model parameters according to the corresponding positions in \cref{sec:sub-encrypt}, the correct aggregation result can be reparameterized as:

\vspace{-0.15in}
\begin{equation}
\label{eq:fusion_plain_cipher}
\small
    \mathbf{B_{g}=[B_p \hspace{0.3em} B_c]=[U_{p}\sqrt{\Sigma_{p}} \hspace{0.3em} U_{c}\sqrt{\Sigma_{c}} ] }
    \in\mathbb{R}^{m\times (r+r)},
    \mathbf{A_{g}}=\begin{bmatrix}\mathbf{A_p} \\ \mathbf{A_c}\end{bmatrix} = \begin{bmatrix}
        \mathbf{\sqrt{\Sigma_{p}}V^\top_{P}} \\ \mathbf{\sqrt{\Sigma_{c}}V^\top_{c}}
                \end{bmatrix}
                \in\mathbb{R}^{ (r+r)\times n}.
\vspace{-0.1in}
\end{equation}

Finally, the client performs SVD on $\mathbf{A_g}$ and $\mathbf{B_g}$ again, and re-adjusts the parameter shapes according to the client's rank to 
$\hat{\mathbf{B}}\in\mathbb{R}^{m\times r}$ and $\hat{\mathbf{A}}\in\mathbb{R}^{ r\times n}$.
As the SVD conducted by the client is all on low-rank matrices, the computation overhead is trivial. 
The detailed derivation of the reparameterization is presented in ~\cref{appendix:reparameter}.
The formal proof of the losslessness of meaningful model updates in SHE-LoRA is provided in ~\cref{appendix:proof}.

\vspace{-0.6em}
\subsection{Privacy Guarantee of SHE-LoRA}
\vspace{-0.6em}
The parameters protected by HE do not leak privacy, and privacy risks are mainly caused by unencrypted parameters.
The column swapping process, which serves as column-wise obfuscation, increases the difficulty of privacy attacks, and can be viewed as adding an asymptotic Gaussian-distributed noise (with $s^2$ as the equivalent variance) to the original gradient as detailed in~\cref{appendix:gaussian_noise}.

Thus, the privacy guarantee of selective encryption can be given by the Bayesian Cramér-Rao Lower Bound \citep{chen2025optimal,huang2025advancing}.
Specifically, the reconstruction error $E_\mathcal{R}$ can be defined as the minimum expected squared reconstruction error:
\begin{equation}
    {E}_\mathcal{R} = \min_{\mathcal{R}} \mathbb{E}_{x\sim\mathcal{X}} \mathbb{E}_{\mathbf{y}\sim f(g(x))} \Bigl[\bigl\|\mathcal{R}(\mathbf{y}) - x\bigr\|_2^2\Bigr] 
    \label{eq:error_def}
\end{equation}
where $\mathcal{R}(\cdot)$ is any data reconstruction attack method, $f(\cdot)$ is the selective HE method of SHE-LoRA, and $g(x)$ is the gradient calculated on data $x$.
Then, the Bayesian Cramér-Rao Lower Bound can be given by \citep{huang2025advancing}:
\begin{equation}
    {E}_\mathcal{R}\geq\frac{d^2}{\mathbb{E}_{x\sim\mathcal{X}}[\text{tr}(J_F(x))]+\lambda_e(J_P)}\geq\frac{d^2}{\frac{n(1-\gamma)}{s^2}\mathbb{E}_{x \sim \mathcal{X}}\|\nabla_xg(x)\|_{\max}^2+\lambda_e(J_P)},
    \label{eq:lower_bound}
\end{equation}
where $J_F(x)$ is the Fisher information matrix, $\text{tr}(\cdot)$ is the trace operator, $\lambda_e(J_P)$ is the largest eigenvalue of the prior-informed Fisher information matrix $J_P$,
$d$ is the data dimension, $n$ is the number of columns in $\mathbf{G}$, $\gamma$ is the encryption ratio, $s^2$ is the variance of the equivalent noise, and $\|\nabla_xg(x) \|_{\max}^2 = (\max_{i,j}|\nabla_x g_j(x_i)|)^2$ quantifies the maximum gradient exposure, which is defined as the squared maximum sensitivity of an unencrypted (exposed) gradient coordinate $g_j(x_i)$ with respect to a data feature $x_i$.

\cref{eq:lower_bound} means that the reconstruction error $E_\mathcal{R}$ is lower-bounded by a quantity whose denominator depends on $\text{tr}(J_F(x))$ and $\lambda_e(J_P)$.
Since $\lambda_e(J_P)$ is determined solely by the data prior, the bound under fixed $n$, $s^2$, and $\gamma$ is governed exclusively by $\|\nabla_xg(x) \|_{\max}^2$.
By selectively encrypting the most sensitive columns, which are measured in terms of the Wanda parameter sensitivity (a proxy for their contribution to input-space sensitivity via the score $S_j$ in Line 217), SHE-LoRA suppresses the dominant terms in the Fisher information matrix, thereby reducing the magnitude of the unencrypted gradients $\|\nabla_xg(x) \|_{\max}^2$ and lowering $\text{tr}(J_F(x))$. 

In summary, by encrypting the most sensitive parameters, SHE-LoRA increases the minimum achievable reconstruction error and strengthens privacy against any gradient inversion attack.

\vspace{-0.8em}
\section{Performance Evaluation} \label{evaluate} 
\vspace{-0.8em}
\subsection{Experimental Setup}\label{subsec:exp_setting}
\vspace{-0.6em}

\begin{wraptable}{r}{0.54\textwidth}
\small
    \vspace{-0.35in}
    \caption{Heterogeneous device types.}
    \label{tb:devices}
    \vspace{-0.1in}
    \centering
    \begin{tblr}{
        width=0.54\textwidth,
        colspec = { X[c,m,0.045\textwidth]|X[c,m,0.07\textwidth]|X[c,m,0.045\textwidth]|X[c,m] |X[c,m,0.02\textwidth]},
        stretch = 0,
    }
        \hline
        Type & GFlops & Rank & Budget (Bert, LLaMA) & \#\\ \hline
        1 & 105.2 & 8 &  (0.4\%,~0.125\%) & 20 \\
        2 & 165.5 & 16 & (0.4\%,~0.125\%) & 15 \\
        3 & 216.9 & 16 & (0.8\%,~0.25\%) & 10 \\
        4 & 243.1 & 32 & (1.6\%,~0.50\%) & 5 \\
        \hline
    \end{tblr}
    \vspace{-1.2em}
\end{wraptable}

\vspace{-0.5em}
\textbf{Model:} We select the Bert-Large~\citep{devlin2019bert} model and the OpenLLaMA-3B~\citep{openlm2023openllama} model for performance evaluation across diverse task scenarios. 
More evaluation settings and results on larger LLMs can be found in 
\cref{appd:scalability}.

\vspace{-0.5em}
\textbf{Datasets:} We use the IMDB~\citep{maas2011imdb} and natural-instructions datasets~\citep{wang2022super} for natural language understanding and 
generation tasks, respectively.
We use the MMLU~\citep{hendrycks2021mmlu} and the GLUE~\citep{wang2019glue} benchmarks for evaluation on natural language generation and natural language understanding tasks, respectively.
For 
evaluation on 
vision tasks, we use the MNIST \citep{lecun2002gradient}, DTD \citep{cimpoi2014describing}, EuroSAT \citep{helber2019eurosat}, GTSRB \citep{stallkamp2012man}, SVHN \citep{netzer2011reading} visual classification datasets.

\vspace{-0.5em}
\textbf{Hyperparameters of HE:} We adopt the CKKS implementation from the TenSEAL library~\citep{openmined2021tenseal} for HE operations.
As instructed\footnote{\url{https://github.com/OpenMined/TenSEAL}}, we set the polynomial degree to 8192, 2048, and the modules chain to [60, 40, 60], [20,20] for OpenLLaMA-3B and Bert-Large, respectively. 

\vspace{-0.5em}
\textbf{Implementation:} SHE-LoRA is implemented with PyTorch based on the Flower framework\footnote{\url{https://flower.ai/}}.
We deploy federated LoRA of LLMs on 50 clients for 200 rounds.
The heterogeneous data partitioning is instantiated via a Dirichlet distribution with $\rho=0.3$.
The evaluation settings and results on more clients can be found in \cref{appendix:moreclients}.
As detailed in~\cref{tb:devices}, we configure four types of client devices with varying computing capabilities, LoRA ranks and encryption budgets. 
Without losing generality, we posit that weaker devices are characterized by lower ranks and encryption budgets, while stronger devices are capable of supporting higher ranks and encryption budgets.

\vspace{-0.6em}
\subsection{Model Tuning Performance}
\vspace{-0.6em}

We compare the model tuning performance of SHE-LoRA with two homogeneous LoRA-based methods (FedIT~\citep{zhang2024fedit} and FedSA~\citep{guo2024fedsa}) and two heterogeneous LoRA-based methods (HeterLoRA~\citep{cho2023heterogeneous} and Flex-LoRA~\citep{bai2024flexlora}). 
The methods' performance on natural language understanding tasks and natural language generation tasks is evaluated on the GLUE benchmark~\citep{wang2019glue} and the MMLU benchmark~\citep{hendrycks2021mmlu}, respectively, while the methods' performance on vision tasks is evaluated on the 5 visual classification datasets.
SHE-LoRA achieves comparable performance to the SOTA method (Flex-LoRA) and outperforms the other baselines. 
Detailed results are elaborated in \cref{appendix:vartasks}.

\vspace{-0.6em}
\subsection{HE Cost Efficiency}
\vspace{-0.6em}

To our best knowledge, SHE-LoRA is the first to integrate SHE into federated LoRA of LLMs.
We implement two methods for the comparison of HE cost efficiency: (1) \textbf{MaskCrypt}~\citep{hu2024maskcrypt}, the SOTA SHE method for securing FL, and
(2) \textbf{Baseline}, the vanilla method with full HE of LoRA parameters.
Specifically, MaskCrypt lets each client select an encryption mask 
and uses the union of the masks for global SHE of parameters during FL.
Baseline uses the stock implementation of CKKS \citep{openmined2021tenseal} to encrypt each LoRA parameter.
The clients' device specifications follow \cref{tb:devices}.
We collect the encryption time and communication overhead
of all the clients under the methods per round during the federated tuning of the OpenLLaMA-3B and Bert-Large models.
~\cref{fig:bert_ratio_cost} and \cref{fig:llama_ratio_cost} show the collected results, where the bar represents the mean, and the lines extending upward and downward from the mean represent the maximum and minimum values, respectively.

\textbf{Encryption Time}: Baseline always consumes the longest encryption time as it encrypts each LoRA parameter.
In comparison, MaskCrypt greatly shortens the encryption time on both models, which is primarily due to its selective encryption of partial parameters. 
However, the clients' encryption time in both Baseline and MaskCrypt severely fluctuates within [311s, 653s] and [1.556s, 104.63s] on OpenLLaMA-3B, and [12s, 60s] and [0.27s, 39.75s] on Bert-Large, respectively.
This is because that these methods cannot deal with the inflation of the global HE mask caused by matrix multiplication and mask heterogeneity (\cref{subsec:motivation}).
Thus, the clients in these methods, whether weak or strong, have to encrypt the same amount of parameters, causing highly imbalanced encryption time.
In contrast, thanks to the column swapping and clustering of encrypted columns, which enables efficient utilization of CKKS key sizes, SHE-LoRA reduces the mean encryption time by 99.87\%$\sim$98.10\% and 99.81\%$\sim$99.31\% as compared to Baseline and MaskCrypt on the OpenLLaMA-3B and Bert-Large models, respectively.
Moreover, we notice that even if the clients differ in computing capabilities, their encryption time under SHE-LoRA hardly fluctuates.
The primary reason is that each client in SHE-LoRA can choose an affordable encryption budget $\gamma_i$ that matches its device capability, and hence results in no significant difference in encryption time across the clients.

\begin{figure}[t!] 
  \centering
  \vspace{-0.2in}
  \begin{minipage}{0.495\textwidth}
    \centering
    \includegraphics[width=\textwidth]{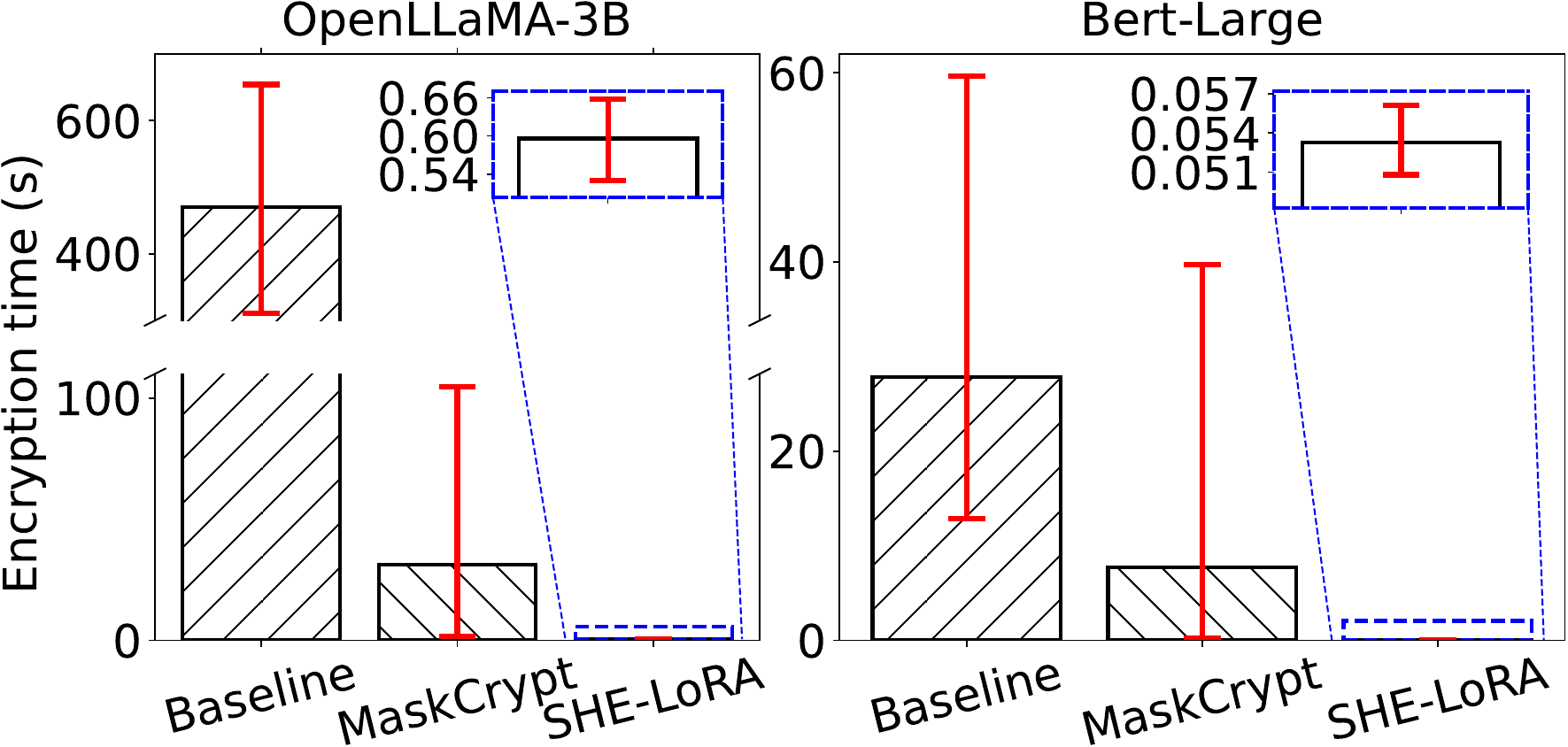}
     \vspace{-0.25in}
    \caption{Encryption time.}
    \label{fig:bert_ratio_cost}
  \end{minipage}
  \begin{minipage}{0.495\textwidth}
    \centering
    \includegraphics[width=\textwidth, height=.485\textwidth]{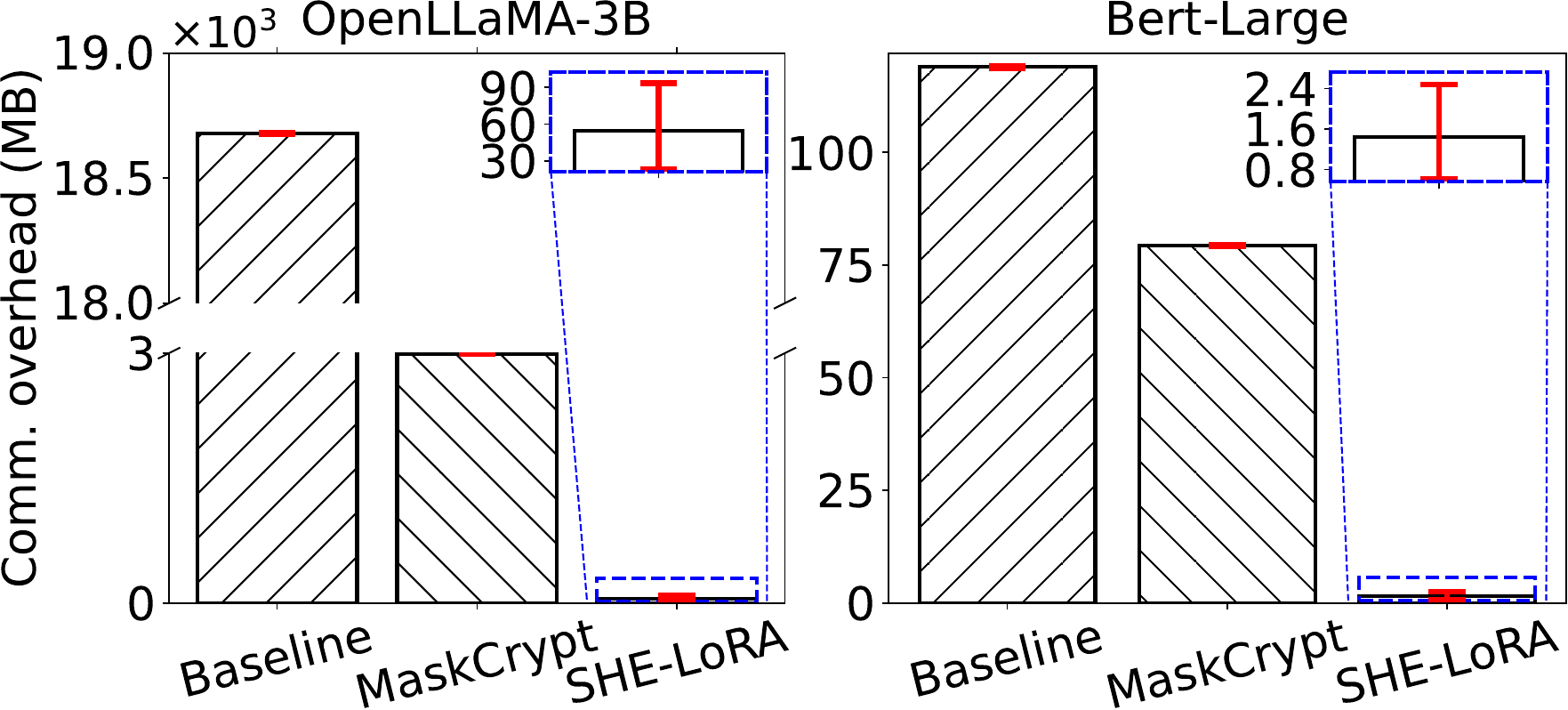}
     \vspace{-0.25in}
    \caption{Communication overhead.}
    \label{fig:llama_ratio_cost}
  \end{minipage}
\end{figure}

\textbf{Communication Overhead}: The mean values under the methods follow similar rankings as those in encryption time, where SHE-LoRA reduces communication overhead by 99.71\%$\sim$98.18\% on OpenLLaMA-3B and 98.78\%$\sim$98.18\% on Bert-Large as compared to Baseline and MaskCrypt, respectively.
The major difference lies in variations, where the communication overhead
stays constant in Baseline and MaskCrypt, but fluctuates in SHE-LoRA. 
This is consistent with their designs: Baseline lets each client encrypt all the parameters (full ciphertext size) and MaskCrypt lets clients encrypt the global union of masks (partial ciphertext size), while SHE-LoRA lets clients choose encryption budgets that match their device capabilities (diverse ciphertext sizes).

\vspace{-0.6em}
\subsection{Resistance to Privacy Attack}\label{subsec:dager}
\vspace{-0.6em}

SHE-LoRA is applicable to both parameter and gradient updates. 
Although parameter-based attacks can be transformed into gradient-based ones, they are highly inaccurate and impractical for large models, even with direct access to gradients~\citep{li2024data}. 
Therefore, we evaluate the resistance of SHE-LoRA against the DAGER attack~\citep{petrov2024dager}, which is the SOTA gradient inversion attack method.
DAGER exploits the fact that gradients are linear combinations of input embeddings (\cref{appendix:data_reconstruct}). 
It iterates over the entire vocabulary and measures the distance between each embedding vector and the principal components of the gradient, aiming to identify tokens. 
More results against membership inference attacks are detailed in \cref{appendix:mia}.

We conduct the federated LoRA of the OpenLLaMA-3B model via SHE-LoRA and MaskCrypt on two datasets, SST2 \citep{socher2013sst2} and Rotten Tomatoes \citep{pang2005RT}, as in DAGER, with $r$=256 and HE settings in \cref{subsec:exp_setting}.
For fair comparison, we let them use the same HE overhead (ciphertext size).
We also implement a non-private SOTA federated LoRA method, Flex-LoRA,
and its privacy-preserving form with DP protection, Flex-LoRA-DP, where gradients are obfuscated with $\sigma^2$ noise, and $\sigma=10^{-3}$ as in DAGER.
We launch the DAGER attack on gradients per training round for each method
over the two datasets, respectively, and collect the data reconstruction scores of DAGER under the batch sizes of $4, 8, 16$.
The scores are collected in terms of ``ROUGE-1'' (R-1 in short), which measures the matching degree of unigrams, and ``ROUGE-2'' (R-2 in short), which measures the matching degree of bigrams.
Smaller scores reflect better privacy protection.

\cref{tb:dager} shows the mean and standard deviation of the scores.
In practice, if SHE-LoRA and MaskCrypt use the encryption budget of the weakest device in \cref{tb:devices} (i.e., $\gamma_i=0.125\%$) for SHE, DAGER completely fails on both methods under all settings (i.e., scores=0).
Thus, we gradually decrease $\gamma_i$ and find that DAGER begins to succeed in partially compromising SHE-LoRA when $\gamma_i<0.3\text{\textperthousand}$,
which consumes only one ciphertext packet.
In contrast, MaskCrypt is no longer secure under the same HE overhead, while DP is far less secure than SHE-LoRA and may significantly degrade model accuracy \citep{sun2024improving}.
The strong resistance of SHE-LoRA against DAGER is primarily due to the column swapping and SHE of important columns.
Although the change in the principal components of gradients caused by column swapping is trivial, which does not lead to the failure of DAGER, such change leads to a strong perturbation in the orthogonal complement of gradients in the low-rank space of LoRA parameters, which causes the failure of DAGER's span check.
In addition, as the key gradient information for reconstructing data has also been protected by SHE, DAGER completely fails when the batch size is greater than 8 even if only 0.3\text{\textperthousand} parameters are encrypted.

\begin{table}[t!]
    \centering
    \small
    \vspace{-0.1in}
    \caption{Data reconstruction scores of DAGER.}
    \vspace{-0.1in}
    \label{tb:dager}
    \begin{tblr}{
        width=\textwidth,
        colspec = { X[c,m,0.09\textwidth] X[c,m,0.15\textwidth] X[c,m,] X[c,m,] X[c,m] X[c,m] X[c,m] X[c,m] },
        stretch = 0,
        cell{1}{1-2} = {r=2}{c},
        cell{3,7}{1} = {r=4}{c},
        cell{1}{3,5,7} = {c=2}{c},
        cell{6,10}{2-8} = {green!15},
    }
\hline
Dataset & Method &  B=4 & & B=8 & & B=16 & \\ 
& & R-1 & R-2 & R-1 & R-2 & R-1 & R-2 \\ 
\hline
SST2 & Flex-LoRA & 95.18$\pm$1.6 & 94.66$\pm$1.8 & 61.14$\pm$1.9 & 52.49$\pm$2.2 & 10.27$\pm$1.6 & 5.86$\pm$1.2 \\
& Flex-LoRA-DP & 86.25$\pm$1.1 & 86.11$\pm$1.4 & 80.28$\pm$1.1 & 78.54$\pm$1.3 & 68.62$\pm$3.1 & 66.44$\pm$3.7 \\
& MaskCrypt & 89.16$\pm$1.3 & 87.93$\pm$2.1 & 61.49$\pm$2.2 & 61.49$\pm$2.4 & 10.91$\pm$1.2 & 6.79$\pm$1.4 \\ 
& SHE-LoRA & 0.72$\pm$5.2 & 0.12$\pm$1.2 & 0.98$\pm$4.4 & 0.14$\pm$0.6  & 0.0$\pm$0.0 & 0.0$\pm$0.0 \\ 
\hline
Rotten Tomatoes & Flex-LoRA & 38.44$\pm$1.5 & 32.76$\pm$1.3 & 3.76$\pm$1.4 & 2.12$\pm$2.1 & 0.0$\pm$0.0 & 0.0$\pm$0.0 \\
& Flex-LoRA-DP & 36.74$\pm$1.9 & 31.28$\pm$2.6 & 3.76$\pm$1.3 & 2.02$\pm$2.3 & 0.0$\pm$0.0 & 0.0$\pm$0.0 \\ 
& MaskCrypt &  31.65$\pm$2.0 & 25.11$\pm$2.6 & 6.09$\pm$1.0 & 3.27$\pm$1.2 & 0.0$\pm$0.0 & 0.0$\pm$0.0\\ 
& SHE-LoRA & 0.0$\pm$0.0 & 0.0$\pm$0.0 & 0.0$\pm$0.0 & 0.0$\pm$0.0 & 0.0$\pm$0.0 & 0.0$\pm$0.0 \\
        \hline
    \end{tblr}
\end{table}

\begin{wrapfigure}{r}{0.4\textwidth}
\vspace{-1em}
    \centering   
    \includegraphics[width=0.4\textwidth]{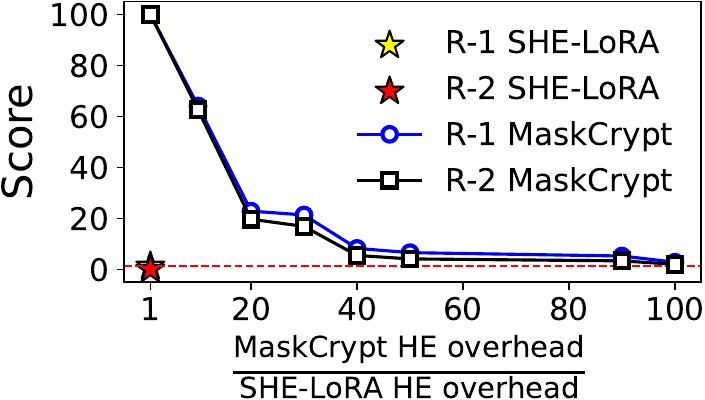}
    \vspace{-1.8em}
    \caption{Resistance comparison.}
    \label{fig:maskcrypt_asr} 
\vspace{-1em}
\end{wrapfigure}

We further decrease the batch size to 1 (i.e., easiest for inversion attack) with $\gamma_i=0.3\text{\textperthousand}$ in SHE-LoRA, while gradually increasing the ratio=$\frac{\text{MaskCrypt HE overhead}}{\text{SHE-LoRA HE overhead}}$,
and measure the data reconstruction scores of DAGER under the two methods, which are shown in \cref{fig:maskcrypt_asr}.
Due to ciphertext inflation, when $\frac{\text{MaskCrypt HE overhead}}{\text{SHE-LoRA HE overhead}}$=1, MaskCrypt is inefficient at protecting sufficient parameters against DAGER. 
To match SHE-LoRA's security, MaskCrypt has to consume $>100\times$ the HE overhead of SHE-LoRA, making it unsuitable for weak clients.

\begin{wrapfigure}{r}{0.4\textwidth}
\vspace{-1em}
    \centering
    \includegraphics[width=0.4\textwidth]{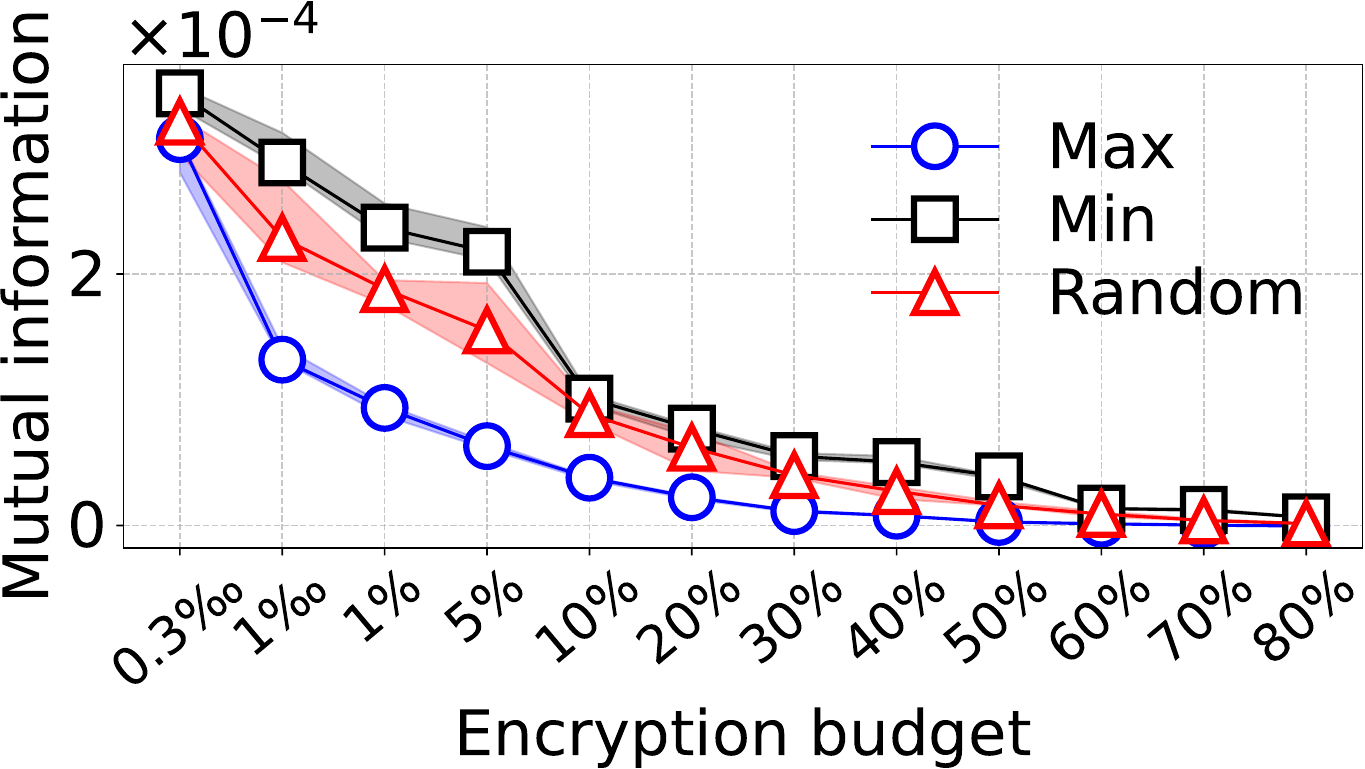}
    \vspace{-2em}
    \caption{Impact of encryption budget and strategy on mutual information.}
    \label{fig:mi_impact} 
\vspace{-1em}
\end{wrapfigure}

\vspace{-0.6em}
\subsection{Privacy Leakage Analysis from Mutual Information Perspective}\label{subsec:mi_privacy}
\vspace{-0.6em}

To delve into the root cause of SHE-LoRA's effectiveness, we perform a detailed analysis on how encryption budget and SHE strategy affect privacy leakage from the perspective of mutual information.
Specifically, we implement three naive encryption strategies: (1) \textbf{Max:} the most important parameters are prioritized for encryption; (2) \textbf{Min:} the least important parameters are prioritized for encryption; (3) \textbf{Random:} parameters are randomly selected for encryption.
Then, we gradually increase the encryption budget from 0.3\text{\textperthousand} to 80\%, and measure the mutual information shared between full parameters and the selectively encrypted parameters per strategy via an efficient approach based on the kernel density estimators (\cref{appendix:kde_matual_information}).
\cref{fig:mi_impact} shows the measured results.
We find that the mutual information in Max drops much faster than the others, while the mutual information in Min drops the slowest, reflecting a strong correlation between parameter importance and privacy leakage risk.
The cost-effective protection of important parameters is the key to the effectiveness of SHE-LoRA against DAGER-like attacks.

\vspace{-0.8em}
\section{Conclusion}\label{conclusion}
\vspace{-0.8em}

We propose SHE-LoRA, a framework integrating SHE and LoRA for efficient and privacy-preserving federated tuning of LLMs in cross-device environments.
It constrains the expansion of ciphertexts through HE subset negotiation, enables tailored privacy protection via selective encryption of parameters based on column-swapping parameter obfuscation, and achieves efficient and accurate update of LoRA parameters by column-aware adaptive aggregation and subsequent reparameterization.
Results show that SHE-LoRA maintains model tuning performance comparable to non-private baselines, while achieving strong resistance to SOTA attacks, and significantly reducing communication overhead by 99.71\% and encryption time by 99.87\%, compared to HE baselines.
Our work demonstrates that SHE with a well balance between privacy and utility can secure federated LoRA of LLMs against DAGER-like attacks and membership inference attacks.
We hope SHE-LoRA can foster further research into creating more reliable and cost-effective privacy protection frameworks for private collaborative learning.

\section{Acknowledgements}

This work was supported in part by the National Key
Research and Development Program of China under Grant 2022YFB4500800,
in part by the National Natural Science Foundation of China under Grant 62103325, 
in part by the Shaanxi High-Level Talent Program under Grant 2021QCYRC4-26.

\section{Reproducibility Statement}

The paper has fully disclosed all the information needed to reproduce the main experimental results of the paper to the extent that it affects the main claims and conclusions of the paper. 
We have provided a link (\url{https://github.com/liyan2015/SHE-LoRA}) to a downloadable source code with detailed explanations and annotations to support the reproducibility of our work. 
In addition, the necessary hyperparameter settings are described in \cref{subsec:exp_setting}.

\bibliography{ref_abbrv}

@article{hoeffding1951combinatorial,
  title={A combinatorial central limit theorem},
  author={Hoeffding, Wassily},
  journal={The Annals of Mathematical Statistics},
  pages={558--566},
  year={1951},
  publisher={JSTOR}
}

@article{bolthausen1984estimate,
  title={An estimate of the remainder in a combinatorial central limit theorem},
  author={Bolthausen, E},
  journal={Probability Theory and Related Fields},
  volume={66},
  number={3},
  pages={379--386},
  year={1984},
  publisher={Springer}
}

@article{hajek1961some,
  title={Some extensions of the Wald-Wolfowitz-Noether theorem},
  author={H{\'a}jek, Jaroslav},
  journal={The Annals of Mathematical Statistics},
  pages={506--523},
  year={1961},
  publisher={JSTOR}
}

@article{cramer1936some,
  title={Some theorems on distribution functions},
  author={Cram{\'e}r, Harald and Wold, Herman},
  journal={Journal of the London Mathematical Society},
  volume={1},
  number={4},
  pages={290--294},
  year={1936},
  publisher={Wiley Online Library}
}

@inproceedings{
chen2025optimal,
title={Optimal Defenses Against Data Reconstruction Attacks},
author={Yuxiao Chen and Gamze Gursoy and Qi Lei},
booktitle={ICML Workshop},
year={2025},
}

@article{huang2025advancing,
  title={Advancing Practical Homomorphic Encryption for Federated Learning: Theoretical Guarantees and Efficiency Optimizations},
  author={Huang, Ren-Yi and Samaraweera, Dumindu and Shekhar, Prashant and Chang, J Morris},
  journal={arXiv preprint arXiv:2509.20476},
  year={2025}
}

@inproceedings{carlini2021extracting,
  title={Extracting training data from large language models},
  author={Carlini, Nicholas and Tramer, Florian and Wallace, Eric and Jagielski, Matthew and Herbert-Voss, Ariel and Lee, Katherine and Roberts, Adam and Brown, Tom and Song, Dawn and Erlingsson, Ulfar and others},
  booktitle={USENIX security},
  pages={2633--2650},
  year={2021}
}

@inproceedings{shidetecting,
  title={Detecting Pretraining Data from Large Language Models},
  author={Shi, Weijia and Ajith, Anirudh and Xia, Mengzhou and Huang, Yangsibo and Liu, Daogao and Blevins, Terra and Chen, Danqi and Zettlemoyer, Luke},
  booktitle={Proc. of ICLR},
  year={2024}
}

@inproceedings{xie2024recall,
  title={ReCaLL: Membership Inference via Relative Conditional Log-Likelihoods},
  author={Xie, Roy and Wang, Junlin and Huang, Ruomin and Zhang, Minxing and Ge, Rong and Pei, Jian and Gong, Neil and Dhingra, Bhuwan},
  booktitle={Proc. of EMNLP},
  pages={8671--8689},
  year={2024}
}

@article{swersky2013multi,
  title={Multi-task bayesian optimization},
  author={Swersky, Kevin and Snoek, Jasper and Adams, Ryan P},
  journal={Proc. of NeurIPS},
  volume={26},
  year={2013}
}

@inproceedings{ye2024data,
  title={Data Contamination Calibration for Black-box LLMs},
  author={Ye, Wentao and Hu, Jiaqi and Li, Liyao and Wang, Haobo and Chen, Gang and Zhao, Junbo},
  booktitle={Findings of ACL},
  pages={10845--10861},
  year={2024}
}

@inproceedings{hendrycks2021mmlu,
  title={Measuring Massive Multitask Language Understanding},
  author={Hendrycks, Dan and Burns, Collin and Basart, Steven and Zou, Andy and Mazeika, Mantas and Song, Dawn and Steinhardt, Jacob},
  booktitle={Proc. of ICLR},
  year={2021},
}

@InProceedings{maas2011imdb,
  author    = {Maas, Andrew L.  and  Daly, Raymond E.  and  Pham, Peter T.  and  Huang, Dan  and  Ng, Andrew Y.  and  Potts, Christopher},
  title     = {Learning Word Vectors for Sentiment Analysis},
  booktitle = {Proc. of ACL},
  month     = {June},
  year      = {2011},
  address   = {Portland, Oregon, USA},
  pages     = {142--150},
}

@inproceedings{socher2013sst2,
  title={Recursive deep models for semantic compositionality over a sentiment treebank},
  author={Socher, Richard and Perelygin, Alex and Wu, Jean and Chuang, Jason and Manning, Christopher D and Ng, Andrew Y and Potts, Christopher},
  booktitle={Proc. of EMNLP},
  pages={1631--1642},
  year={2013}
}

@inproceedings{pang2005RT,
  title={Seeing Stars: Exploiting Class Relationships for Sentiment Categorization with Respect to Rating Scales},
  author={Pang, Bo and Lee, Lillian},
  booktitle={Proc. of ACL},
  pages={115--124},
  year={2005}
}

@article{rivest1978he,
  title={On data banks and privacy homomorphisms},
  author={Rivest, Ronald L and Adleman, Len and Dertouzos, Michael L and others},
  journal={Foundations of secure computation},
  volume={4},
  number={11},
  pages={169--180},
  year={1978},
  publisher={Citeseer}
}

@article{moon1995kdeestimation,
  title={Estimation of mutual information using kernel density estimators},
  author={Moon, Young-Il and Rajagopalan, Balaji and Lall, Upmanu},
  journal={Physical Review E},
  volume={52},
  number={3},
  pages={2318},
  year={1995},
  publisher={APS}
}

@inproceedings{gentry2009fully,
  title={Fully homomorphic encryption using ideal lattices},
  author={Gentry, Craig},
  booktitle={Proc. of STOC},
  pages={169--178},
  year={2009}
}

@inproceedings{agrawal2004ope,
author = {Agrawal, Rakesh and Kiernan, Jerry and Srikant, Ramakrishnan and Xu, Yirong},
title = {Order preserving encryption for numeric data},
year = {2004},
address = {New York, NY, USA},
booktitle = {Proc. of SIGMOD},
pages = {563–574},
numpages = {12},
}

@inproceedings{boldyreva2009order,
  title={Order-preserving symmetric encryption},
  author={Boldyreva, Alexandra and Chenette, Nathan and Lee, Younho and O’neill, Adam},
  booktitle={Proc. of EUROCRYPT},
  pages={224--241},
  year={2009},
  organization={Springer}
}

@article{klema1980singular,
  title={The singular value decomposition: Its computation and some applications},
  author={Klema, Virginia and Laub, Alan},
  journal={IEEE TAC},
  volume={25},
  number={2},
  pages={164--176},
  year={1980},
  publisher={IEEE}
}

@inproceedings{zhu2019deep,
  title={Deep leakage from gradients},
  author={Zhu, Ligeng and Liu, Zhijian and Han, Song},
  booktitle={Proc. of NeurIPS},
  year={2019}
}

@inproceedings{jeon2021gradient,
  title={Gradient inversion with generative image prior},
  author={Jeon, Jinwoo and Lee, Kangwook and Oh, Sewoong and Ok, Jungseul and others},
  booktitle={Proc. of NeurIPS},
  year={2021}
}

@inproceedings{blaze1998divertible,
  title={Divertible protocols and atomic proxy cryptography},
  author={Blaze, Matt and Bleumer, Gerrit and Strauss, Martin},
  booktitle={Proc. of TACT},
  pages={127--144},
  year={1998},
  organization={Springer}
}

@article{aloufi2022threshold,
author = {Aloufi, Asma and Hu, Peizhao and Song, Yongsoo and Lauter, Kristin},
title = {Computing Blindfolded on Data Homomorphically Encrypted under Multiple Keys: A Survey},
year = {2021},
issue_date = {December 2022},
volume = {54},
number = {9},
issn = {0360-0300},
journal = {ACM CSUR.},
month = oct,
articleno = {195},
numpages = {37},
}

@article{ateniese2006improved,
  title={Improved proxy re-encryption schemes with applications to secure distributed storage},
  author={Ateniese, Giuseppe and Fu, Kevin and Green, Matthew and Hohenberger, Susan},
  journal={ACM TISSEC},
  volume={9},
  number={1},
  pages={1--30},
  year={2006},
}

@inproceedings{hassibi1993optimal,
  title={Optimal brain surgeon and general network pruning},
  author={Hassibi, Babak and Stork, David G and Wolff, Gregory J},
  booktitle={Proc. of ICNN},
  pages={293--299},
  year={1993},
  organization={IEEE}
}

@inproceedings{frantar2023sparsegpt,
  title={Sparsegpt: Massive language models can be accurately pruned in one-shot},
  author={Frantar, Elias and Alistarh, Dan},
  booktitle={Proc. of ICML},
  pages={10323--10337},
  year={2023},
  organization={PMLR}
}

@inproceedings{radford2021learning,
  title={Learning transferable visual models from natural language supervision},
  author={Radford, Alec and Kim, Jong Wook and Hallacy, Chris and Ramesh, Aditya and Goh, Gabriel and Agarwal, Sandhini and Sastry, Girish and Askell, Amanda and Mishkin, Pamela and Clark, Jack and others},
  booktitle={Proc. of ICML},
  pages={8748--8763},
  year={2021},
  organization={PmLR}
}

@article{lecun2002gradient,
  title={Gradient-based learning applied to document recognition},
  author={LeCun, Yann and Bottou, L{\'e}on and Bengio, Yoshua and Haffner, Patrick},
  journal={Proc. of the IEEE},
  volume={86},
  number={11},
  pages={2278--2324},
  year={2002},
  publisher={Ieee}
}

@inproceedings{cimpoi2014describing,
  title={Describing textures in the wild},
  author={Cimpoi, Mircea and Maji, Subhransu and Kokkinos, Iasonas and Mohamed, Sammy and Vedaldi, Andrea},
  booktitle={Proc. of CVPR},
  pages={3606--3613},
  year={2014}
}

@article{helber2019eurosat,
  title={Eurosat: A novel dataset and deep learning benchmark for land use and land cover classification},
  author={Helber, Patrick and Bischke, Benjamin and Dengel, Andreas and Borth, Damian},
  journal={IEEE J-STARS},
  volume={12},
  number={7},
  pages={2217--2226},
  year={2019},
  publisher={IEEE}
}

@inproceedings{netzer2011reading,
  title={Reading digits in natural images with unsupervised feature learning},
  author={Netzer, Yuval and Wang, Tao and Coates, Adam and Bissacco, Alessandro and Wu, Baolin and Ng, Andrew Y and others},
  booktitle={NIPS workshop},
  volume={2011},
  number={5},
  pages={7},
  year={2011},
  organization={Granada}
}

@article{stallkamp2012man,
  title={Man vs. computer: Benchmarking machine learning algorithms for traffic sign recognition},
  author={Stallkamp, Johannes and Schlipsing, Marc and Salmen, Jan and Igel, Christian},
  journal={Neural networks},
  volume={32},
  pages={323--332},
  year={2012},
  publisher={Elsevier}
}

@article{frantar2023gptq,
  title={Gptq: Accurate post-training quantization for generative pre-trained transformers},
  author={Frantar, Elias and Ashkboos, Saleh and Hoefler, Torsten and Alistarh, Dan},
  journal={Proc. of ICLR},
  year={2023}
}

@inproceedings{kanagavelu2020two,
  title={Two-phase multi-party computation enabled privacy-preserving federated learning},
  author={Kanagavelu, Renuga and Li, Zengxiang and Samsudin, Juniarto and Yang, Yechao and Yang, Feng and Goh, Rick Siow Mong and Cheah, Mervyn and Wiwatphonthana, Praewpiraya and Akkarajitsakul, Khajonpong and Wang, Shangguang},
  booktitle={Proc. of CCGrid},
  year={2020},
}

@inproceedings{zheng2024optimizing,
  title={Optimizing Privacy in Federated Learning with MPC and Differential Privacy},
  author={Zheng, Chao and Wang, Liming and Xu, Zhen and Li, Hongjia},
  booktitle={Proc. of CACML},
  year={2024}
}

@article{zhu2025deer,
  author={Zhu, Meilu and Mao, Axiu and Liu, Jun and Yuan, Yixuan},
  journal={IEEE TMI}, 
  title={DEeR: Deviation Eliminating and Noise Regulating for Privacy-Preserving Federated Low-Rank Adaptation}, 
  year={2025},
  volume={44},
  number={4},
  pages={1783-1795},
}

@inproceedings{
    yu2022differentially,
    title={Differentially Private Fine-tuning of Language Models},
    author={Da Yu and Saurabh Naik and Arturs Backurs and Sivakanth Gopi and Huseyin A Inan and Gautam Kamath and Janardhan Kulkarni and Yin Tat Lee and Andre Manoel and Lukas Wutschitz and Sergey Yekhanin and Huishuai Zhang},
    booktitle={Proc. of ICLR},
    year={2022},
}

@ARTICLE{li2024data,
  author={Wang, Fei and Li, Baochun},
  journal={IEEE TBD}, 
  title={Data Reconstruction and Protection in Federated Learning for Fine-Tuning Large Language Models}, 
  year={2024},
  volume={},
  number={},
  pages={1-13},
}

@inproceedings{QueyrutSF23,
  author       = {Simon Queyrut and
                  Valerio Schiavoni and
                  Pascal Felber},
  title        = {Mitigating Adversarial Attacks in Federated Learning with Trusted Execution Environments},
  booktitle={Proc. of ICDCS},
  year= {2023},
}

@inproceedings{sun2024massive,
  title={Massive Activations in Large Language Models},
  author={Sun, Mingjie and Chen, Xinlei and Kolter, J Zico and Liu, Zhuang},
  booktitle={Proc. of ICLR Workshop},
  year={2024},
}

@inproceedings{mcmahan2017fedavg,
  title={Communication-efficient learning of deep networks from decentralized data},
  author={McMahan, Brendan and Moore, Eider and Ramage, Daniel and Hampson, Seth and y Arcas, Blaise Aguera},
  booktitle={Proc. of AISTATS},
  year={2017}
}

@article{devlin2019bert,
  title={BERT: Pre-training of Deep Bidirectional Transformers for Language Understanding},
  author={Devlin, Jacob and Chang, Ming-Wei and Lee, Kenton and Toutanova, Kristina},
  journal={arXiv preprint arXiv:1810.04805},
  year={2018}
}

@software{openlm2023openllama,
  author = {Geng, Xinyang and Liu, Hao},
  title = {OpenLLaMA: An Open Reproduction of LLaMA},
  month = May,
  year = 2023,
  url = {https://github.com/openlm-research/open_llama}
}

@article{openmined2021tenseal,
  title={TenSEAL: A library for encrypted tensor operations using homomorphic encryption. arXiv 2021},
  author={Benaissa, Ayoub and Retiat, Bilal and Cebere, Bogdan and Belfedhal, Alaa Eddine},
  journal={arXiv preprint arXiv:2104.03152},
  year={2021}
}

@article{durante2024agent,
  title={Agent AI: Surveying the Horizons of Multimodal Interaction},
  author={Durante, Zane and Huang, Qiuyuan and Wake, Naoki and Gong, Ran and Park, Jae Sung and Sarkar, Bidipta and Taori, Rohan and Noda, Yusuke and Terzopoulos, Demetri and Choi, Yejin and others},
  journal={arXiv preprint arXiv:2401.03568},
  year={2024}
}

@article{huang2024position,
  title={Position Paper: Agent AI Towards a Holistic Intelligence},
  author={Huang, Qiuyuan and Wake, Naoki and Sarkar, Bidipta and Durante, Zane and Gong, Ran and Taori, Rohan and Noda, Yusuke and Terzopoulos, Demetri and Kuno, Noboru and Famoti, Ade and others},
  journal={arXiv preprint arXiv:2403.00833},
  year={2024}
}

@inproceedings{wang2019glue,
  title={GLUE: A Multi-Task Benchmark and Analysis Platform for Natural Language Understanding},
  author={Wang, Alex and Singh, Amanpreet and Michael, Julian and Hill, Felix and Levy, Omer and Bowman, Samuel R},
  booktitle={Proc. of ICLR},
  year={2019}
}

@inproceedings{wang2022super,
  title={Super-NaturalInstructions: Generalization via Declarative Instructions on 1600+ NLP Tasks},
  author={Wang, Yizhong and Mishra, Swaroop and Alipoormolabashi, Pegah and Kordi, Yeganeh and Mirzaei, Amirreza and Naik, Atharva and Ashok, Arjun and Dhanasekaran, Arut Selvan and Arunkumar, Anjana and Stap, David and others},
  booktitle={Proc. of EMNLP},
  year={2022}
}

@inproceedings{cho2023heterogeneous,
  title={Heterogeneous LoRA for Federated Fine-tuning of On-Device Foundation Models},
  author={Cho, Yae Jee and Liu, Luyang and Xu, Zheng and Fahrezi, Aldi and Joshi, Gauri},
  booktitle={Proc. of EMNLP},
  year={2024}
}

@inproceedings{zhang2024fedit,
  title={Towards Building The Federatedgpt: Federated Instruction Tuning},
  author={Zhang, Jianyi and Vahidian, Saeed and Kuo, Martin and Li, Chunyuan and Zhang, Ruiyi and Yu, Tong and Wang, Guoyin and Chen, Yiran},
  booktitle={Proc. of ICASSP},
  year={2024}
}

@article{yan2024federa,
  title={Federa: Efficient fine-tuning of language models in federated learning leveraging weight decomposition},
  author={Yan, Yuxuan and Yang, Qianqian and Tang, Shunpu and Shi, Zhiguo},
  journal={arXiv preprint arXiv:2404.18848},
  year={2024}
}

@inproceedings{su2024fedra,
  title={Fedra: A random allocation strategy for federated tuning to unleash the power of heterogeneous clients},
  author={Su, Shangchao and Li, Bin and Xue, Xiangyang},
  booktitle={Proc. of ECCV},
  year={2024}
}

@article{babakniya2023slora,
  title={Slora: Federated parameter efficient fine-tuning of language models},
  author={Babakniya, Sara and Elkordy, Ahmed Roushdy and Ezzeldin, Yahya H and Liu, Qingfeng and Song, Kee-Bong and El-Khamy, Mostafa and Avestimehr, Salman},
  journal={arXiv preprint arXiv:2308.06522},
  year={2023}
}

@inproceedings{tian2024hydralora,
  title={HydraLoRA: An Asymmetric LoRA Architecture for Efficient Fine-Tuning},
  author={Tian, Chunlin and Shi, Zhan and Guo, Zhijiang and Li, Li and Xu, Chengzhong},
  booktitle={Proc. of NeurIPS},
  year={2024}
}

@inproceedings{meng2024pissa,
  title={Pissa: Principal singular values and singular vectors adaptation of large language models},
  author={Meng, Fanxu and Wang, Zhaohui and Zhang, Muhan},
  booktitle={Proc. of NeurIPS},
  year={2024}
}

@article{wang2024flora,
  title={Flora: Federated fine-tuning large language models with heterogeneous low-rank adaptations},
  author={Wang, Ziyao and Shen, Zheyu and He, Yexiao and Sun, Guoheng and Wang, Hongyi and Lyu, Lingjuan and Li, Ang},
  journal={arXiv preprint arXiv:2409.05976},
  year={2024}
}

@inproceedings{bai2024flexlora,
    title={Federated Fine-tuning of Large Language Models under Heterogeneous Tasks and Client Resources},
    author={Jiamu Bai and Daoyuan Chen and Bingchen Qian and Liuyi Yao and Yaliang Li},
    booktitle={Proc. of NeurIPS},
    year={2024},
}

@inproceedings{chen2024rbla,
  title={RBLA: Rank-Based-LoRA-Aggregation for Fine-Tuning Heterogeneous Models in FLaaS},
  author={Chen, Shuaijun and Tavallaie, Omid and Nazemi, Niousha and Zomaya, Albert Y},
  booktitle={Proc. of ICWS},
  year={2024}
}

@article{yi2023pfedlora,
  title={pFedLoRA: model-heterogeneous personalized federated learning with LoRA tuning},
  author={Yi, Liping and Yu, Han and Wang, Gang and Liu, Xiaoguang and Li, Xiaoxiao},
  journal={arXiv preprint arXiv:2310.13283},
  year={2023}
}

@inproceedings{balunovic2022lamp,
  title={Lamp: Extracting text from gradients with language model priors},
  author={Balunovic, Mislav and Dimitrov, Dimitar and Jovanovi{\'c}, Nikola and Vechev, Martin},
  booktitle={Proc. of NeurIPS},
  year={2022}
}

@inproceedings{petrov2024dager,
  title={Dager: Exact gradient inversion for large language models},
  author={Petrov, Ivo and Dimitrov, Dimitar I and Baader, Maximilian and M{\"u}ller, Mark and Vechev, Martin},
  booktitle={Proc. of NeurIPS},
  year={2024}
}

@inproceedings{sun2024improving,
  title={Improving {LoRA} in Privacy-preserving Federated Learning},
  author={Youbang Sun and Zitao Li and Yaliang Li and Bolin Ding},
  booktitle={Proc. of ICLR},
  year={2024},
}

@article{guo2024fedsa,
  title={Selective Aggregation for Low-Rank Adaptation in Federated Learning},
  author={Guo, Pengxin and Zeng, Shuang and Wang, Yanran and Fan, Huijie and Wang, Feifei and Qu, Liangqiong},
  journal={arXiv preprint arXiv:2410.01463},
  year={2024}
}

@article{jin2023fedml,
  title={FedML-HE: An efficient homomorphic-encryption-based privacy-preserving federated learning system},
  author={Jin, Weizhao and Yao, Yuhang and Han, Shanshan and Gu, Jiajun and Joe-Wong, Carlee and Ravi, Srivatsan and Avestimehr, Salman and He, Chaoyang},
  journal={arXiv preprint arXiv:2303.10837},
  year={2023}
}

@article{hu2024maskcrypt,
  title={MASKCRYPT: Federated learning with selective homomorphic encryption},
  author={Hu, Chenghao and Li, Baochun},
  journal={IEEE TDSC},
  year={2024}
}

@article{han2023adaptive,
  title={Adaptive batch homomorphic encryption for joint federated learning in cross-device scenarios},
  author={Han, Junhao and Yan, Li},
  journal={IEEE IoT-J},
  volume={11},
  number={6},
  pages={9338--9354},
  year={2023}
}

@article{zheng2022aggregation,
  title={Aggregation service for federated learning: An efficient, secure, and more resilient realization},
  author={Zheng, Yifeng and Lai, Shangqi and Liu, Yi and Yuan, Xingliang and Yi, Xun and Wang, Cong},
  journal={IEEE TDSC},
  volume={20},
  number={2},
  pages={988--1001},
  year={2022}
}

@article{voigt2017eu,
  title={The eu general data protection regulation (gdpr)},
  author={Voigt, Paul and Von dem Bussche, Axel},
  journal={A practical guide, 1st ed., Cham: Springer International Publishing},
  volume={10},
  number={3152676},
  pages={10--5555},
  year={2017}
}

@inproceedings{hu2022lora,
  title={LoRA: Low-Rank Adaptation of Large Language Models},
  author={Hu, Edward J and Wallis, Phillip and Allen-Zhu, Zeyuan and Li, Yuanzhi and Wang, Shean and Wang, Lu and Chen, Weizhu and others},
  booktitle={Proc. of ICLR},
  year={2022},
}

@inproceedings{sun2024wanda,
  title={A Simple and Effective Pruning Approach for Large Language Models},
  author={Mingjie Sun and Zhuang Liu and Anna Bair and J Zico Kolter},
  booktitle={Proc. of ICLR},
  year={2024},
}

@inproceedings{cheon2017ckks,
  title={Homomorphic encryption for arithmetic of approximate numbers},
  author={Cheon, Jung Hee and Kim, Andrey and Kim, Miran and Song, Yongsoo},
  booktitle={Proc. of ASIACRYPT},
  year={2017}
}

@inproceedings{mugunthan2019smpai,
  title={SMPAI: Secure Multi-Party Computation for Federated Learning},
  author={Mugunthan, Vaikkunth and Polychroniadou, Antigoni and Byrd, David and Balch, Tucker Hybinette},
  booktitle={Proc. of NeurIPS},
  year={2019}
}

@inproceedings{2020flchallenges,
  author    = {Li, Tian and Sahu, Anit Kumar and Talwalkar, Ameet and Smith, Virginia},
  title     = {Federated Learning: Challenges, Methods, and Future Directions},
  booktitle   = {Proc. of SPM},
  year      = {2020}
}

@article{kairouz2021advances,
  title={Advances and Open Problems in Federated Learning},
  author={Kairouz, Peter and McMahan, H Brendan and Avent, Brendan and Bellet, Aur{\'e}lien and Bennis, Mehdi and Bhagoji, Arjun Nitin and Bonawitz, Kallista and Charles, Zachary and Cormode, Graham and Cummings, Rachel and others},
  journal={Foundations and Trends in Machine Learning},
  volume={14},
  number={1--2},
  year={2021},
}

@inproceedings{kiani2025differentially,
title={Differentially Private Federated Learning with Time-Adaptive Privacy Spending},
author={Shahrzad Kiani and Nupur Kulkarni and Adam Dziedzic and Stark Draper and Franziska Boenisch},
booktitle={Proc. of ICLR},
year={2025},
}
\bibliographystyle{iclr2026_conference}

\newpage
\begin{center}
    \Large{\textbf{Appendix}}
\end{center}

\appendix

\phantomsection
\section*{List of Contents}

\begin{itemize}[label={}, left=0pt, itemsep=0.4em, topsep=0pt, parsep=0pt]
  \item \hyperref[related]{A: Related Work}

  \item \hyperref[appd:Preliminary]{B: Preliminary}
  \begin{itemize}[label={}, left=1em, itemsep=0.1em, topsep=0.2em, parsep=0pt]
    \item \hyperref[appd:rationale]{B.1 Relationship between Parameter Importance and Privacy Leakage Risk}
    \item \hyperref[appendix:data_reconstruct]{B.2 Data Reconstruction via Inversion Attacks}
    \item \hyperref[app:flpeft]{B.3 Federated Parameter-Efficient Fine-Tuning}
    \item \hyperref[appendix:preliminary:dp-noise]{B.4 Matrix Multiplication in LoRA Amplifies DP Noise}
    \item \hyperref[appendix:preliminary:he]{B.5 Homomorphic Encryption}
    \item \hyperref[appendix:preliminary:ope]{B.6 Order-Preserving Encryption}
    \item \hyperref[appendix:preliminary:svd]{B.7 Singular Value Decomposition}
  \end{itemize}

  \item \hyperref[appendix:key_management]{C: HE Key Management and Distribution}

  \item \hyperref[appd:technical-details]{D: Additional Technical Details of SHE-LoRA}
  \begin{itemize}[label={}, left=1em, itemsep=0.1em, topsep=0.2em, parsep=0pt]
    \item \hyperref[appendix:neo_algo]{D.1 The Algorithm of HE Subset Negotiation}
    \item \hyperref[appendix:plain_algo]{D.2 Aggregation of Unencrypted Model Parameters}
    \item \hyperref[appendix:cipher_algo]{D.3 Aggregation of Encrypted Model Parameters}
    \item \hyperref[appendix:reparameter]{D.4 Reparameterization of LoRA}
    \item \hyperref[appendix:proof]{D.5 Proof of the Losslessness of Meaningful Model Updates in SHE-LoRA}
    \item \hyperref[appendix:distribution_shift]{D.6 Distribution Shift of Model Parameter Importance Values}
    \item \hyperref[appendix:gaussian_noise]{D.7 Proof of Asymptotic Gaussian-distributed Noise}
  \end{itemize}

  \item \hyperref[appd:scalability]{E: Distributability and Scalability}
  \begin{itemize}[label={}, left=1em, itemsep=0.1em, topsep=0.2em, parsep=0pt]
    \item \hyperref[appendix:moreclients]{E.1 Performance on More Clients}
    \item \hyperref[appendix:largerLLMs]{E.2 Performance on Larger LLMs}
    \item \hyperref[appendix:stronger_base_bench]{E.3 Performance on Stronger Base Models and more Challenging Benchmarks}
  \end{itemize}

  \item \hyperref[appd:results]{F: Additional Experimental Results}
  \begin{itemize}[label={}, left=1em, itemsep=0.1em, topsep=0.2em, parsep=0pt]
    \item \hyperref[appendix:vartasks]{F.1 Performance on Varying Tasks}
    \item \hyperref[appendix:non-iid]{F.2 Robustness under Varying Non-IID Conditions}
    \item \hyperref[appendix:kde_matual_information]{F.3 Efficient Estimation of Mutual Information}
    \item \hyperref[appendix:mia]{F.4 Resistance against Membership Inference Attacks}
    \item \hyperref[appendix:ablation]{F.5 Impact of Sensitive Parameters on Performance}
  \end{itemize}

  \item \hyperref[appendix:notations]{G: Table of Notations}
  
  \item \hyperref[appendix:limitation]{H: Limitations}

  \item \hyperref[appendix:borader]{I: Broader Impact}

  \item \hyperref[appendix:llmusage]{J: The Use of Large Language Models (LLMs)}
\end{itemize}

\newpage

\section{Related Work}\label{related} 
\vspace{-0.1in}

\textbf{LoRA Tuning in FL. }
FedIT \citep{zhang2024fedit} employed FedAvg to aggregate updates from locally performed LoRA fine-tuning.
FeDeRA \citep{yan2024federa} modified the initialization of matrices $\mathbf{A}$ and $\mathbf{B}$ by the SVD
results of pre-trained parameters, which helps mitigate client drift caused by data heterogeneity.
SLoRA \citep{babakniya2023slora} employed SVD for initializing matrices $\mathbf{A}$ and $\mathbf{B}$, and calculated a mask to reduce the parameters for training and communication overhead.
FedRA \citep{su2024fedra} partitioned the pre-trained model by layer, enabling clients to fine-tune specific layers on homogeneous models. 
Inspired by the mixture of experts architecture, HydraLoRA \citep{tian2024hydralora} proposed to learn multiple LoRAs corresponding to different knowledge.
PiSSA \citep{meng2024pissa} proposed to directly fine-tune the principal component of the pre-trained model, facilitating rapid convergence and enhancing overall performance.
These LoRA variants primarily focus on reducing training costs in homogeneous settings, but neglect the heterogeneity of device capabilities and Non-IID data in cross-device federated PEFT scenarios.

\textbf{Heterogeneous LoRA. }
FLoRA \citep{wang2024flora} contended that the aggregation used in FedIT \citep{zhang2024fedit} is flawed and employed stacking to aggregate parameters from heterogeneous clients.
HeterLoRA \citep{cho2023heterogeneous} implemented different ranks on clients and aggregates heterogeneous LoRA modules through zero-padding, which may dilute certain parameters.
RBLA \citep{chen2024rbla} designed a rank-based LoRA aggregation method to prevent model parameter dilution caused by zero-padding.
Flex-LoRA \citep{bai2024flexlora} synthesized a complete set of LoRA weights from individual client contributions, and employed SVD for weight reparameterization, thereby fully leveraging heterogeneous client resources.
Furthermore, pFedLoRA \citep{yi2023pfedlora} aggregated adapters to facilitate personalized FL, and treated LoRA as a mechanism for knowledge transfer, allowing clients to locally train heterogeneous models while maintaining a homogeneous adapter. 
Although these works have made significant contributions to heterogeneous federated PEFT with LoRA, they still suffer from potential privacy leakage risks under inversion attacks.

\textbf{Privacy Preservation Techniques.} 
To defend against inversion attacks, \cite{yu2022differentially} employed the DP-SGD optimizer for fine-tuning, providing formal DP guarantees for model parameters.
Considering that DP noise may be amplified by LoRA multiplication, FFA-LORA \citep{sun2024improving} modified the LoRA training procedure by freezing matrix $\mathbf{A}$ after initialization and solely applying DP on matrix $\mathbf{B}$, at the cost of fewer tunable model parameters. 
Similarly, FedSA-LoRA \citep{guo2024fedsa} proposed to globally train matrix $\mathbf{A}$ while reserving matrix $\mathbf{B}$ for local training without participating in aggregation.
\cite{han2023adaptive} proposed an adaptive, precision-lossless batch HE method that transforms model parameters into non-negative values to prevent overflow errors. 
Inspired by model pruning techniques, FedML-HE \citep{jin2023fedml} proposed to encrypt only a subset of sensitive model parameters to reduce HE overhead. 
MaskCrypt \citep{hu2024maskcrypt} proposed to select a consensus mask for SHE to minimize overhead across homogeneous devices.
In summary, the application of DP involves a trade-off between privacy protection and model convergence, while existing HE methods struggle to balance privacy and efficiency in cross-device federated PEFT with LoRA, particularly under scenarios with Non-IID data and heterogeneous device capabilities.

\vspace{-0.1in}
\section{Preliminary} \label{appd:Preliminary}
\vspace{-0.1in}

\subsection{Relationship between Parameter Importance and Privacy Leakage Risk} \label{appd:rationale}

On one hand, as indicated in previous SHE methods \citep{hu2024maskcrypt,jin2023fedml}, the privacy leakage risk in FL mainly comes from the fact that the locally trained model weights contain private data information, and are vulnerable to attacks.
The lower gradient loss a parameter results in, the higher privacy leakage risk it may cause under the attacks.

On the other hand, model pruning methods \citep{frantar2023gptq,hassibi1993optimal,sun2024wanda} have proved that by carefully screening the parameters by importance, many parameters can be removed without hurting performance.
Following this rationale, SHE-LoRA lets heterogeneous clients adaptively encrypt partial parameters via importance screening and negotiate a global HE subset for secure Federated PEFT without hurting privacy and efficiency.

Specifically, let $\mathbf{W}$ and $\mathcal{L}(\mathbf{W})$ denote parameters and the loss function, respectively.
Given a subset of the parameters $\mathbf{w}\in \mathbf{W}$, $\mathbf{W}_{-\mathbf{w}}=\mathbf{W}-\mathbf{w}$ denotes the parameters with $\mathbf{w}$ zeroed out.
According to \citep{hassibi1993optimal}, the loss function $\mathcal{L}(\mathbf{W}_{-\mathbf{w}})$ can be expanded as the following Taylor series:

\vspace{-0.15in}
\begin{equation}
\small
\mathcal{L}(\mathbf{W}_{-\mathbf{w}})=\mathcal{L}(\mathbf{W}-\mathbf{w})=\mathcal{L}(\mathbf{W})+g^{\top}\mathbf{w}+\frac{1}{2}\mathbf{w}^{\top}H\mathbf{w}+O(\|\mathbf{w}\|^{3}),
\vspace{-0.05in}
\end{equation}
where $g^{\top}$ and $H$ are the first-order (gradient) and second-order (Hessian) partial derivatives, respectively.
$O(\|\mathbf{w}\|^{3})$ is the higher-order infinitesimal of $\mathbf{w}$.
The sensitivity of $\mathbf{w}$ (denoted as $\Omega(\mathbf{w})$) can be denoted as:

\vspace{-0.2in}
\begin{equation}
\small
\Omega(\mathbf{w})=|\mathcal{L}(\mathbf{W}_{-\mathbf{w}})-\mathcal{L}(\mathbf{W})|=g^{\top}\mathbf{w}+\frac{1}{2}\mathbf{w}^{\top}H\mathbf{w}+O(\|\mathbf{w}\|^{3}) \approx \frac{1}{2}\mathbf{w}^{\top}H\mathbf{w}.
\vspace{-0.1in}
\end{equation}

Considering that when $\mathbf{W}$ converges to the local minimum after training, $g^{\top}$ is 0, while $O(\|\mathbf{w}\|^{3})$ can be neglected, thus $\Omega(\mathbf{w})$ only depends on $H$ and $\mathbf{w}$.
Finally, the sensitivity of the $q$-th parameter is calculated as $\Omega(\mathbf{w}_q)=\frac{\mathbf{w}_q^2}{2[H^{-1}] _{qq}}$ according to OBS \citep{hassibi1993optimal}, where $[H^{-1}] _{qq}$ is the diagonal element at $(q,q)$ of the inverse Hessian matrix, $H^{-1}$.

Inspired by $H=XX^{\top}$ from SparseGPT \citep{frantar2023sparsegpt}, Wanda \citep{sun2024wanda} modified weight importance assessment to avoid the high computation cost of $H$ and $H^{-1}$ (\cref{eq:wanda}).
Based on Wanda, SHE-LoRA assesses channel-wise weight importance as in \cref{subsec:importance}. 

Besides, \cref{fig:mi_impact} shows that the mutual information (privacy leakage) of \emph{Max} decreases much sharper than the others along with the encryption of parameters, which intuitively reflects a strong correlation between weight importance and privacy leakage risk.

\subsection{Data Reconstruction via Inversion Attacks}\label{appendix:data_reconstruct}

Inversion attacks~\citep{petrov2024dager,balunovic2022lamp,zhu2019deep,jeon2021gradient} 
aim to reconstruct private training data or model parameters, such as pixel values in images or sensitive information in text, by reversing the clients' updates uploaded during federated fine-tuning,
such as gradients, parameter updates or prediction results. 
Specifically, for a model $f_{\mathbf{W}}(x)=\mathbf{W}^\top x$, a client trains it with its local data $(x,y)$, and calculates the gradient $g$ as the derivative of the loss function $\mathcal{L}$:

\vspace{-0.15in}
\begin{equation}
\small
    g=\frac{\partial \mathcal{L}}{\partial \mathbf{W}}=\frac{\partial \mathcal{L}}{\partial f_\mathbf{W}} \frac{\partial f_\mathbf{W}}{\partial \mathbf{W}} =\frac{\partial \mathcal{L}}{\partial f_\mathbf{W}} \cdot x,
    \label{eq:grad_analysis}
\end{equation}
and uploads it to the server. 
The uploaded gradient $g$ contains a linear combination of the original data. 
An attacker can analyze the principal components of the model update to search for the client's data distribution space and then reconstruct the data~\citep{petrov2024dager}.

\subsection{Federated Parameter-Efficient Fine-Tuning}\label{app:flpeft}

Federated Parameter-Efficient Fine-Tuning (PEFT) is a technology designed for the efficient fine-tuning of large models within a distributed learning framework that prioritizes data privacy. 
For instance, LoRA \citep{hu2022lora} is a well-established method for PEFT. 
Suppose the weight matrix of the global pre-trained model is denoted as $\mathbf{W}\in\mathbb{R}^{m\times n}$, LoRA introduces trainable parameters, $\mathbf{B}\in\mathbb{R}^{m\times r}$ and $\mathbf{A}\in\mathbb{R}^{r\times n}$.
Each client freezes the original weights $\mathbf{W}$ and learns only the trainable low-rank parameters $\mathbf{B}$ and $\mathbf{A}$ on its local data.
Then, the updates (e.g., weights or gradients) from clients are aggregated by the server via

\vspace{-0.2in}
\begin{equation}
\small
    \Delta\overline{\mathbf{W}} = \sum^{N}_{i=1} \tau_i \mathbf{B_i A_i},
    \label{eq:fed_agg}
\end{equation}
where $\tau_i$ denotes the weight coefficient of Client $i$, proportional to its local data size.
Federated PEFT uses \cref{eq:fed_agg} to guarantee heterogeneous aggregation of LoRA parameters.
SVD is used to reparameterize the aggregation result into heterogeneous LoRA parameters, which reduces the communication overhead of $\Delta\overline{W}$.
The new LoRA parameters are then sent back to each device for continued local training, and the process is iterated repeatedly to progressively optimize the model.
It is worth noting that the model parameters are transmitted in plaintext throughout this process, making them visible to the server.

\subsection{Matrix Multiplication in LoRA Amplifies DP Noise}\label{appendix:preliminary:dp-noise}

Differential privacy (DP)~\citep{yu2022differentially,zhu2025deer} is a common privacy defense mechanism that adds specific noise to gradient or parameter updates, making the relationship between gradients and data non-linear, which misleads the attacker's optimization direction.
Given that the LoRA fine-tuning parameter is $\Delta \mathbf{W} = \mathbf{BA}$, and $\Delta \mathbf{W}$ plays a role during model inference, if a noise $\epsilon$ satisfying the privacy budget is added to the parameters, then:

\vspace{-0.2in}
\begin{equation}
\small
    \Delta \mathbf{W} = (\mathbf{B}+\epsilon_{B})(\mathbf{A}+\epsilon_{A}) = \mathbf{BA}+\mathbf{B}\epsilon_{A}+\mathbf{A}\epsilon_{B}+\epsilon_{A}\epsilon_{B}
    \label{eq:dp_lora}
\vspace{-0.1in}
\end{equation}

Thus, except for $\mathbf{BA}$, all the terms in \cref{eq:dp_lora} are noises that will be aggregated into the global model parameters, which may severely affect the model's performance and convergence direction.

\subsection{Homomorphic Encryption}\label{appendix:preliminary:he}

Homomorphic encryption~\citep{rivest1978he,gentry2009fully} is a cryptographic primitive that allows computations to be performed on encrypted data without revealing the underlying plaintext. 
It is exceptionally well-suited for FL, as it enables the computation of aggregation in server without exposing clients' updates.
The Cheon–Kim–Kim–Song (CKKS) encryption scheme \citep{cheon2017ckks} supports approximate numerical computations and floating-point arithmetic. 
The CKKS offers relatively high computational efficiency and supports vectorized operations, making it highly suitable for LoRA tuning in FL. 
It can be used to encrypt the parameters or gradients of local models, allowing the server to perform model aggregation without decrypting any data, thereby protecting user privacy. 
In the implementation of SHE-LoRA, we employ the CKKS provided by the TenSEAL library, which supports homomorphic computations on both vectors and tensors, thereby enabling encrypted computation operations over complex model parameters.

\subsection{Order-Preserving Encryption}\label{appendix:preliminary:ope}

Order-Preserving Encryption (OPE)~\citep{agrawal2004ope,boldyreva2009order} is a cryptographic technique that preserves the numerical order of plaintexts. 
If two plaintexts, $a$ and $b$, satisfy the condition $a < b$, then their encrypted ciphertexts will also satisfy $Enc(a) < Enc(b)$. 
This enables comparisons, sorting, or range queries to be performed on ciphertexts without decryption.
In principle, OPE maps plaintexts to an interval within the ciphertext space, while ensuring that the mapping function is monotonically increasing. 
Although OPE does not provide semantic security in the traditional sense (i.e., it is possible to infer the approximate range of the plaintext from the ciphertext), it is highly useful in applications that require sorting or range operations on encrypted data, such as encrypted databases or cloud storage queries.
In SHE-LoRA, we employ OPE to hide clients' HE subset, which prevents the server from snooping on the positions of important model parameters and conducting targeted attacks.

\subsection{Singular Value Decomposition}\label{appendix:preliminary:svd}

Singular value decomposition (SVD)~\citep{klema1980singular} is a mathematical method that decomposes any real or complex matrix into the product of three standard matrices. 
It is applicable to matrices of arbitrary shapes, and hence suitable for tasks such as dimensionality reduction, data compression, and recommendation systems.
For a real matrix $\mathbf{W}\in \mathbb{R}^{m \times n}$, SVD decomposes it as follows: $\mathbf{W = U \Sigma V^\top}$,
where $\mathbf{U}\in\mathbb{R}^{m \times m}$ is an orthogonal matrix, $\mathbf{\Sigma}\in\mathbb{R}^{m \times n}$ is a diagonal matrix with non-negative real values on its diagonal, known as singular values, arranged in descending order, and $\mathbf{V}^\top\in\mathbb{R}^{n \times n}$ is an orthogonal matrix as well.
To match heterogeneous LoRAs from clients, many methods \citep{yan2024federa,babakniya2023slora} have employed SVD decomposition, breaking down the aggregated matrix into $\mathbf{B}=\mathbf{U} \sqrt{\mathbf{\Sigma}}$ and $\mathbf{A}= \sqrt{\mathbf{\Sigma}}\mathbf{V}^\top$, which enables the projection of the original parameters into a lower-dimensional space while preserving the 
essential features.

\section{HE Key Management and Distribution}\label{appendix:key_management}
In the configuration outlined herein, the system necessitates an honest-but-curious server to execute the aggregation of models. 
Then, a trusted third party is required to oversee the distribution of keys for both OPE and HE, with the assumption that it will not collude with the server.
However, when the server colludes with compromised clients, cryptographic techniques such as threshold HE, Multi-Key HE, and proxy re-encryption can be employed to achieve distributed secure computation.
\textbf{Threshold Homomorphic Encryption}~\citep{aloufi2022threshold} is a cryptographic scheme that amalgamates the threshold cryptography and HE. 
It not only supports computations on encrypted data (homomorphism), but also enables joint decryption by multiple participants (thresholding), thereby enhancing both security and fault tolerance while preserving privacy. 
Decryption can only be accomplished with the collective participation of a predetermined number of participants, known as the threshold.
\textbf{Proxy Re-Encryption}~\citep{blaze1998divertible,ateniese2006improved} permits a semi-trusted third party (the proxy) to transform ciphertext encrypted for one party into ciphertext decryptable by another, all without accessing the original plaintext or either party's private keys. 
\textbf{Multi-Key Homomorphic Encryption}~\citep{aloufi2022threshold} enables homomorphic operations to be performed on user data encrypted under different keys, obviating the need for key sharing. 
The resultant ciphertext requires collaborative partial decryption by all participants, each utilizing their respective private key, to yield the final plaintext outcome. 
This mechanism ensures that joint computations can be performed in multi-party data collaborations while protecting the data privacy of each participant.

In SHE-LoRA, we retain Multi-Key HE as a countermeasure against potential collusion between the server and clients, while leaving its practical integration and optimization to future endeavors.

\section{Additional Technical Details of SHE-LoRA}\label{appd:technical-details}

\subsection{The Algorithm of HE Subset Negotiation}\label{appendix:neo_algo}

\subsubsection{Objectives of the Negotiation}

The negotiation aims to select a subset of columns $\textit{Res} \subseteq \bigcup_{i=1}^{N} G_i$ with cardinality $|\textit{Res}| = \max_{i \in \{N\}} k_i$, where $k_i$ denotes the number of sensitive columns that client $i$ can afford to encrypt and $\{N\}$ denotes the set of all clients, thereby achieving a principled trade-off between privacy and HE overhead.
However, the notions of ``privacy'' and ``HE overhead'' are inherently ambiguous without concrete metrics. 
To make the trade-off operational, we formalize two explicit objectives with $\textit{min-Coverage}$ and $\textit{max-Risk}$. 

We first define the $\textit{Coverage}_i$ of client $i$ as the fraction of its sensitive columns that are selected for encryption (i.e., included in $\textit{Res}$), and hence protected from exposure: $\textit{Coverage}_i = \frac{|\textit{Res} \cap G_i|}{|G_i|},$
where $G_i$ is the set of columns that client $i$ deems sensitive, and $\textit{Res}$ is the final selected subset. 

To extend HE-based privacy protection to as many clients as possible,
we adopt a max-min criterion and define the overall coverage as:

\begin{equation}
\label{eq:mincoverage}
    \textit{min-Coverage} = \min_{i \in \{N\}} \textit{Coverage}_i.
\end{equation}
Maximizing $\textit{min-Coverage}$ ensures that the coverage of every client is at least this value, thereby guaranteeing the worst-case coverage of sensitive parameter columns: even the least-covered client receives a quantifiable level of protection.

Second, to quantify the privacy leakage risk of each client, we define  
$\textit{Risk}_i = \frac{\sum_{j \in G_i \setminus \textit{Res}} S_j}{\sum_{j \in G_i} S_j},$ 
which measures the fraction of client $i$'s total sensitivity that remains unencrypted.  
We further define 
\begin{equation}
\label{eq:maxrisk}
    \textit{max-Risk} = \max_{i \in \{N\}} \textit{Risk}_i
\end{equation}
to capture the worst-case privacy leakage risk across all clients, which we aim to minimize.
Lower residual risk implies stronger privacy preservation.
Minimizing $\textit{max-Risk}$ ensures that the privacy leakage risk of every client is at most this value, thereby providing a worst-case privacy guarantee: even the most exposed client suffers from no more than a quantifiable level of privacy leakage risk.

The negotiation thus seeks a balanced $\textit{Res}$ that jointly optimizes both \cref{eq:mincoverage} and \cref{eq:maxrisk}.
To formalize this goal, we define a composite objective score as~\cref{eq:objective}:
\begin{equation}
    \texttt{score}(\textit{Res}) = 
    \underbrace{\min_{i \in \{N\}} \frac{|\textit{Res} \cap G_i|}{|G_i|}}_{\textit{min-Coverage}}
    \; - \;
    \underbrace{\max_{i \in \{N\}}
    \frac{\sum_{j \in G_i \setminus \textit{Res}} S_j}{\sum_{j \in G_i} S_j}}_{\textit{max-Risk}}.
    \label{eq:objective}
\end{equation}

\cref{eq:objective} balances the minimal client coverage of sensitive parameters and the maximal privacy leakage risk of unencrypted parameters.
These two objectives often contradict under a limited encryption budget: satisfying client-specific privacy leakage risk may sacrifice overall coverage of clients' sensitive parameters, and vice versa.

\subsubsection{Procedure of the Negotiation}

\begin{algorithm}[t!] 
\caption{HE subset negotiation}
\small
\label{alg:negotiation}
\KwIn{
  $clients = \{r_i,k_i,(G_i,S_i)\}_{i=1}^N$, 
  $a,b,c$ are three hyper-parameters of selection ratio. 
}
\KwOut{
  $\textit{Res}$: the column index of global HE subset.
}

$\textit{Res}$\textit{, selected\_num} $\leftarrow \{\},0$\;

\textit{Common} $\leftarrow$ sorts columns in $\bigcup_i G_i$ from most to least frequently deemed as sensitive\;
\textit{Sensitivity} $\leftarrow$ sorts columns in $\bigcup_i G_i$ from highest to lowest sensitivity\;

$\Gamma \leftarrow \{ (k, count_k) \} \text{ clusters clients by budget } k \text{, and sorts them in ascending order of } k$\;

\For{\text{each budget} $k$  \text{in } $\Gamma$}{
  $\lambda \leftarrow$ \text{update current budget by } $k - selected\_num$\;
  $\textit{Clients} \leftarrow$ collect all unique columns from $\bigcup_{i: k_i = k} G_i$, and sort them by $\min S_j$\;
  \If{$\text{count}_k = 1$}{
        $P \leftarrow \text{select top }\lambda \text{ columns from $G_i$-\textit{Res} of the unique client}$\;
        \textit{Res} $\leftarrow$ \{$P$,\{\textit{Res}\}\}\;
        \textit{selected\_num} $\leftarrow k$\;
    }
  \Else {
      $a, b, c \leftarrow$ coefficients optimized via Bayesian optimization under $a + b + c = 1$, balancing \textit{min-Coverage} and \textit{max-Risk} as detailed in \cref{alg:bayes-opt}\;
      $P$ $\leftarrow$ \text{select top }$\lfloor a\lambda \rfloor$\text{ columns from }\textit{Clients}-\textit{Res}\;
      
      $C$ $\leftarrow$ \text{select top }$\lfloor b\lambda \rfloor$\text{ columns from \textit{Common}-$P$-\textit{Res}}\;
      
      $S$ $\leftarrow$ \text{select top }$\lambda - \lfloor a\lambda \rfloor - \lfloor b\lambda \rfloor$\text{ columns from \textit{Sensitivity}-$P$-$C$-\textit{Res}}\;
      
      \textit{Res} $\leftarrow$ \text{result of }$k$ \text{columns is }\{$S, C, P, \{\textit{Res}\}$\}\;
    
      \textit{selected\_num} $\leftarrow k$\;
  }
}
\Return{$\textit{Res}$}
\end{algorithm}

The workflow of HE subset negotiation is summarized as~\cref{alg:negotiation}. 
Note that the server keeps the rank $r_i$ and encryption budget $\gamma_i$ of all clients.
The number of encrypted columns of Client $i$ is denoted as $k_i=\gamma_i \cdot n$, where $n$ is the number of columns in the parameter matrix.
As described in \cref{subsec:negotiation}, the server receives a set of tuples $(G_i,S_i)$ from clients as input, where $G_i$ is Client $i$'s set of columns that needs HE, and $S_i$ is their sensitivities.

At Lines 1-3, the server first initiates the negotiation result $\textit{Res}$ and the number of columns that have been selected $\textit{selected\_num}$.
Then, the server maintains two shared lists: the \textit{Common} list, which sorts all columns in $\bigcup_i G_i$ from most to least frequently deemed as sensitive, and the \textit{Sensitivity} list, which sorts all columns in $\bigcup_i G_i$ from highest to lowest sensitivity. 
The columns ranked higher in $\textit{Common}$ are more frequently deemed as sensitive by the clients, and selecting them improves \textit{min-Coverage}. 
The columns ranked higher in $\textit{Sensitivity}$ have greater global sensitivity, and selecting them reduces \textit{max-Risk}.
At Line~4, the server clusters the clients by their budget $k$, and sorts them in the ascending order of $k$.
At Lines~5-18, the server repeats the process several times, which depends on the number of unique budgets.
For each process, the number of column positions to be negotiated, denoted as $\lambda$, is calculated by the difference between current budget $k$ and the number of columns that have been determined.
A \textit{Clients} list ranks unique columns from budget-$k$ clients by their minimum sensitivity, aiming to encrypt columns that are personally deemed as sensitive. 
If the current budget corresponds to a single client (i.e., $count_k=1$), the strategy greedily selects from that client's $G_i$ for optimal privacy protection.
Otherwise, the server iteratively selects $\lfloor a\lambda \rfloor$ columns from \textit{Clients}, $\lfloor b\lambda \rfloor$ columns from \textit{Common}, and $\lambda - \lfloor a\lambda \rfloor - \lfloor b\lambda \rfloor$ columns from \textit{Sensitivity} without duplicate selection to form the global HE subset, where $a+b+c=1$ and $\lfloor\cdot\rfloor$ is the floor function.
This hybrid selection jointly optimizes worst-case coverage and privacy risk while preserving a degree of client personalization to accommodate heterogeneous environments.

\begin{algorithm}
\caption{Bayesian optimization for the selection of $a$, $b$ and $c$.}
\small
\label{alg:bayes-opt}
\KwIn{
    $clients = \{r_i,k_i,(G_i,S_i)\}_{i=1}^N$, \text{Current} $\lambda$ , \textit{Res}, \textit{Clients}, \textit{Common}, \textit{Sensitivity}.
}
\KwOut{Optimal coefficients $(a^*, b^*, c^*)$ with $a^* + b^* + c^* = 1$}

$\mathcal{X} \leftarrow \text{Define search space with } \{(a, b) \in [0,1]^2 \mid a + b \leq 1\}$\;

Initialize a Bayesian optimization model $\mathcal{M}$ over $\mathcal{X}$\;

\For{$t = 1$ \KwTo $N_{\text{opt}}$}{

    \tcp{Bayesian optimizer selects $(a_t, b_t)$ for evaluation}
    $(a_t, b_t) \leftarrow \textsc{Select}(\mathcal{M})$\;
    $c_t \leftarrow 1 - a_t - b_t$\;

    $P_t \leftarrow$ select top $\lfloor a_t\lambda\rfloor$ columns from $\textit{Clients-Res}$\;
    $C_t \leftarrow$ select top $\lfloor b_t\lambda\rfloor$ columns from \textit{Common-Res}-$P_t$\;
    $S_t \leftarrow$ select top $\lambda-\lfloor a_t\lambda \rfloor - \lfloor b_t\lambda\rfloor$ columns from \textit{Sensitivity-Res}-$P_t$-$C_t$\;
    $\textit{Res}_t \leftarrow$ the union of \{$S_t,C_t,P_t$\}\;
    
    \For{each client $i$}{
        $\textit{Coverage}_i \leftarrow \dfrac{|\textit{Res}_t \cap G_i|}{|G_i|}$\;
        $\textit{Risk}_i = \frac{\sum_{j \in G_i \setminus \textit{Res}} S_j}{\sum_{j \in G_i} S_j}$\;        
    }
    $\textit{min-Coverage} \leftarrow \min_{i \in \{N\}} \textit{Coverage}_i$\;

    $\textit{max-Risk} \leftarrow \max_{i \in \{N\}} \textit{Risk}_i$\;

    $\text{score}_t \leftarrow \textit{min-Coverage} - \textit{max-Risk}$\;

     \tcp{Update Bayesian optimizer with $(a_t, b_t, \text{score}_t)$}
     $\mathcal{M} \leftarrow \textsc{Update}(\mathcal{M}, (a_t, b_t), \text{score}_t)$\;

}

$(a^*, b^*) \leftarrow \arg\max_{t} \text{score}_t$\;
$c^* \leftarrow 1 - a^* - b^*$\;

\Return $(a^*, b^*, c^*)$\;
\end{algorithm}

Clearly, the success of this objective relies heavily on the choice of the coefficients $a$, $b$, and $c$.
As presented in~\cref{alg:bayes-opt}, we employ the Bayesian optimization \citep{swersky2013multi} to determine their optimal values via searching different combinations of column selection from three complementary perspectives: client-specific (\textit{Clients}), commonly shared (\textit{Common}), and sensitivity-driven (\textit{Sensitivity}).
Specifically,~\cref{alg:bayes-opt} begins by defining the feasible search space for the parameters $(a, b, c)$ at Line~1, subject to the constraint $a + b + c = 1$.
At Line~2, a Bayesian optimization model (i.e., a Gaussian process) is initialized over this search space.  
At Lines~3–16, the model performs $N_{\text{opt}}$ (e.g., 50) iterations of optimization: 
at each iteration $t$, the current parameter triple $(a_t, b_t, c_t)$ is used to construct the negotiation result $\textit{Res}_t$ at Line~5, following the same column selection procedure as in~\cref{alg:negotiation}.  
Then, at Lines~10–12, the client-specific coverage and privacy risk are evaluated for every client.  
The overall score is computed at Line~15 using Eq.~\ref{eq:objective}, and in Line~16, this score is fed back to update the Bayesian optimization model, guiding the next parameter selection.  
Finally, Lines~17–19 return the best-performing coefficients $(a^*, b^*, c^*)$ as the outcome of the negotiation.

\subsection{Aggregation of Unencrypted Model Parameters}\label{appendix:plain_algo}

\cref{alg:column_wise_aggregation} illustrates how the server aggregates unencrypted model parameters.
The input of the algorithm is the set of unencrypted weight updates $\Delta W_{i}^\text{plain}$ from all $N$ clients.
At Line 1, the columns of the aggregation result is initialized by $K=n-\min(k_1,\dots,k_N)$, where $n$ is the number of columns in the frozen pre-trained parameter matrix, and $k_i$ is the number of encrypted columns of Client $i$.
At Lines 2-3, the server initializes the aggregation result to $\mathbf{0}\in \mathbb{R}^{m\times K}$, and sets a counter that records the respective contributions of the clients during the aggregation of each column.
At Lines 4-7, the server incorporates Client $i$'s parameters $\Delta W_{i}^\text{plain}$ into the aggregation results, and updates the counter to record the number of clients contributing to each column.
At Lines 8-11, the server weight-averages the results based on the counters and returns the final aggregated result $\Delta \overline{W}_\text{plain}\in \mathbb{R}^{m\times K}$.

\begin{algorithm}[t!]
\caption{Aggregation of unencrypted model parameters}
\label{alg:column_wise_aggregation}
\small
\KwIn{
  $\{\Delta W^{\text{plain}}_i\}_{i=1}^N$: Set of $N$ matrices, each $\Delta W^{\text{plain}}_i$ has shape $(m, n-k_i)$.
}
\KwOut{
  $\Delta \overline{W}_\text{plain}$: Aggregated matrix with shape $(m, K)$.
}

$K \leftarrow n - \min(k_1, \dots, k_N)$ \; 

$\Delta \overline{W}_\text{plain} \leftarrow \mathbf{0}_{m \times K}$ \;

$\text{Counts} \leftarrow \mathbf{0}_{1 \times K}$ \;

\For{each client $i = 1$ \KwTo $N$}{
  $c_i \leftarrow \text{get column count of } \Delta W^{\text{plain}}_i$ \;
  
  $\Delta \overline{W}_\text{plain}[:, :c_i] \leftarrow \Delta \overline{W}_\text{plain}[:, :c_i] + \Delta W^{\text{plain}}_i$\; 
  
  $\text{Counts}[:c_i] \leftarrow \text{Counts}[:c_i] + 1$ \; 
}

\For{$j = 1$ \KwTo $K$}{
  \If{\emph{Counts}$[j] > 0$}{
    $\Delta \overline{W}_\text{plain}[:, j] \leftarrow \Delta \overline{W}_\text{plain}[:, j] / \text{Counts}[j]$\;
  }
}

\Return{$\Delta \overline{W}_\text{plain}$}
\end{algorithm}

\subsection{Aggregation of Encrypted Model Parameters}\label{appendix:cipher_algo}

\cref{alg:encrypted_aggregation} illustrates how the server aggregates encrypted model parameters. 
The algorithm takes the set of ciphertexts $\Delta W_{i}^\text{cipher} \in \mathbb{R}^{m\times k_i}$ as the input.
At Line 1, the columns of the aggregation result are initialized by $K^*=\max{(k_1,k_2,\dots,k_N)}$, where $k_i$ is the column number of encrypted model parameters of client $i$.
At Lines 2-3, the server initializes the aggregation result to $\mathbf{0}_{1 \times K^*}$ and sets a counter to record the respective contributions of the clients.
At Lines 4-6, the server incorporates encrypted model parameters $\Delta W_i^{\text{cipher}}$ into the aggregation results and updates the counter to record the number of clients contributing to each column.
Finally, at Lines 7-9, the server weight-averages the encrypted model parameters based on the counters, and returns the final aggregated encrypted model parameters, denoted as $\Delta \overline{W}_\text{cipher} \in \mathbb{R}^{m\times K^*}$.
Although the columns of encrypted model parameters extends to $K^*$, each client can receive a set of encrypted blocks matching its encryption budget.

\begin{algorithm}[htbp!]
\caption{Aggregation of encrypted model parameters}
\label{alg:encrypted_aggregation}
\small
\KwIn{
  $\{\Delta W_i^{\text{cipher}}\}^N_{i=1}$: Sets of $N$ ciphertexts from clients, each $\Delta W_i^{\text{cipher}}$ has shape $(m,k_i)$.
}
\KwOut{
  $\Delta \overline{W}_\text{cipher}$: Aggregated matrix with shape $(m, K^*)$. 
}

$K^* \leftarrow \max(k_1, k_2, \dots, k_N)$ \; 

$\Delta \overline{W}_\text{cipher} \leftarrow \mathbf{0}_{m\times K^*}$ \; 

$\text{Counts} \leftarrow \mathbf{0}_{1 \times K^*}$ \; 

\For{each client $i = 1$ \KwTo $N$}{
  $\Delta \overline{W}_\text{cipher}[:,-k_i:] \leftarrow \Delta \overline{W}_\text{cipher}[:,-k_i:] + \Delta W_i^{\text{cipher}}$
  
  $\text{Counts}[-k_i:] \leftarrow \text{Counts}[-k_i:] + 1$ \; 
}

\For{ $j = 1$ \KwTo $K^*$}{
  \If{$\emph{Counts}[j] > 0$}{
    $\Delta \overline{W}_\text{cipher}[:,j] \leftarrow \Delta \overline{W}_\text{cipher}[:,j]/\text{Counts[j]}$
  }
}

\Return{$\Delta \overline{W}_\text{cipher}$}
\end{algorithm}

\vspace{-1em}
\subsection{Reparameterization of LoRA}\label{appendix:reparameter}

The updated full-parameter for each client, termed as $\Delta \mathbf{W}$, can be formulated as two parts, the plaintext update $\Delta \overline{\mathbf{W}}_\text{plain}=\mathbf{B}_p\mathbf{A}_p$ and the ciphertext update $\Delta \overline{\mathbf{W}}_\text{cipher}\in\mathbb{R}^{r\times k_i}$ as shown in~\cref{subeq:plain_cipher}.
In order to reparameterize the two parts of the model parameters into the parameter matrices $\hat{\mathbf{B}}$ and $\hat{\mathbf{A}}$ of LoRA, we first apply SVD and zero-padding to the ciphertext update to generate two low-rank matrices ($\mathbf{B}_c\in\mathbb{R}^{m\times r},\mathbf{A}_c\in\mathbb{R}^{r\times n}$), which ensures their dimension aligns with $\mathbf{B}_p\mathbf{A}_p$.
The final LoRA parameter matrices ($\hat{\mathbf{B}},\hat{\mathbf{A}}$) are calculated as follows:
\begin{align}
\small
    \Delta \mathbf{W} &= \Delta \overline{\mathbf{W}}_\text{plain}+ \Delta \overline{\mathbf{W}}_\text{cipher} \label{subeq:plain_cipher} \\
    &\xlongequal{\text{SVD}}\mathbf{B}_p\mathbf{A}_p+\mathbf{B}_c\mathbf{A}_c \nonumber \\
    &=(\mathbf{U}_{1}\sqrt\mathbf{\Sigma}_{1})\sqrt\mathbf{\Sigma}_{1}\mathbf{V}^\top_{1}+(\mathbf{U}_{2}\sqrt\mathbf{\Sigma}_{2})\sqrt\mathbf{\Sigma}_{2}\mathbf{V}^\top_{2} \nonumber\\
    &=\begin{bmatrix}
        \mathbf{U}_{1}\sqrt\mathbf{\Sigma}_{1} ,\mathbf{U}_{2}\sqrt\mathbf{\Sigma}_{2}
    \end{bmatrix}^{m\times(r+r)}
    \begin{bmatrix}
        \sqrt\mathbf{\Sigma}_{1}\mathbf{V}^\top_{1}\\\sqrt\mathbf{\Sigma}_{2}\mathbf{V}^\top_{2}
    \end{bmatrix}^{(r+r)\times n} 
    \nonumber\\
    &\xlongequal{\text{SVD}}(\mathbf{U}_{3}\mathbf{\Sigma}_{3}\mathbf{V}^\top_{3})(\mathbf{U}_{4}\mathbf{\Sigma}_{4}\mathbf{V}^\top_{4}) \nonumber\\
    &= (\mathbf{U}_{3}\mathbf{\Sigma}_{3}\mathbf{V}^\top_{3}\mathbf{U}_{4}\sqrt\mathbf{\Sigma}_{4})_{:,:r}(\sqrt\mathbf{\Sigma}_{4}\mathbf{V}^\top_{4})_{:r,:} \nonumber \\
    &=\hat{\mathbf{B}}\hat{\mathbf{A}} \nonumber
\end{align}

\vspace{-1em}
\subsection{Proof of the Losslessness of Meaningful Model Updates in SHE-LoRA}\label{appendix:proof}

The losslessness of meaningful model updates in the aggregation of SHE-LoRA is supported by the following theorem.

\begin{theorem} 
\label{th:mengingful_updates}
Whether a column of the parameter matrix is encrypted or not, it will always be integrated into the aggregated model, and hence results in no loss of meaningful model updates.
\end{theorem}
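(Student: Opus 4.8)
The plan is to reduce the statement to three composable, individually lossless operations: the per-client column permutation introduced by column swapping (\cref{sec:sub-encrypt}), the exhaustive split of each $\mathbf{A}_i$ into a plaintext block and a ciphertext block, and the column-wise weighted averaging performed by the two aggregation routines (\cref{appendix:plain_algo,appendix:cipher_algo}). First I would observe that the column swapping of Client $i$ is a permutation $\pi_i$ of $\{1,\dots,n\}$, hence a bijection; composing $\pi_i^{-1}$ after aggregation restores every column to its original slot, so the obfuscation step by itself can neither drop nor corrupt an update. Next I would note that, in the swapped order, the index set $\{1,\dots,n\}$ is partitioned by Client $i$ into the plaintext block $\{1,\dots,n-k_i\}$ and the ciphertext block $\{n-k_i+1,\dots,n\}$, which are disjoint and exhaustive; consequently the columns of the local update $\mathbf{B}_i\mathbf{A}_i$ split exactly into those of $\mathbf{B}_i\mathbf{A}_i^{\text{plain}}$ and those of $\mathbf{B}_i\mathbf{A}_i^{\text{cipher}}$, and every column of $\mathbf{B}_i\mathbf{A}_i$ lands in precisely one of the two buckets.

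Then I would trace each bucket through its aggregation. In the plaintext path (\cref{alg:column_wise_aggregation}), $\Delta\mathbf{W}_i^{\text{plain}}$ is accumulated into the first $n-k_i$ columns of the $K$-wide accumulator with $K=n-\min_j k_j$, and the per-column counter is incremented only there; hence column $j$ finally holds $\frac{1}{c_j}\sum_{i:\,k_i\le n-j}\tau_i\,\mathbf{B}_i\mathbf{A}_i[:,j]$, the genuine weighted average over exactly the $c_j$ clients that own column $j$ in plaintext, with no dilution by zero entries from non-owners — this is the step that distinguishes the method from zero-padding aggregation. Symmetrically, in the ciphertext path (\cref{alg:encrypted_aggregation}) the blocks are right-aligned, so column $j$ of the $K^\ast$-wide accumulator with $K^\ast=\max_j k_j$ holds the weighted average over exactly the clients that own column $j$ in ciphertext. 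Because $K=n-\min_j k_j \ge n-\max_j k_j$, the plaintext index range $\{1,\dots,K\}$ and the ciphertext index range $\{n-K^\ast+1,\dots,n\}$ together cover all of $\{1,\dots,n\}$, so no column position is absent from the aggregate.

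Finally I would invoke the reparameterization of \cref{appendix:reparameter}: stacking $\mathbf{B}_g=[\mathbf{B}_p\ \mathbf{B}_c]$ horizontally and $\mathbf{A}_p,\mathbf{A}_c$ vertically yields $\mathbf{B}_g\mathbf{A}_g=\mathbf{B}_p\mathbf{A}_p+\mathbf{B}_c\mathbf{A}_c$ (\cref{eq:fusion_plain_cipher}), which re-adds the plaintext and ciphertext contributions column by column before $\pi_i^{-1}$ returns them to their original positions; the subsequent low-rank SVD re-projection is an exact factorization of this matrix up to the rank truncation each client deliberately applies to match its own $r_i$, which is a design choice rather than a lost update. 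Chaining these steps gives the claim: every column, whether or not it was encrypted for a given client, survives into the aggregated model carrying its true averaged update. The main obstacle I anticipate is the overlap band $n-\max_j k_j < j \le n-\min_j k_j$, where a column is produced by \emph{both} accumulators; there I must verify that each client holds that column in exactly one of the two blocks — so its contribution is counted once, on the correct side — and that the separate per-column normalizations of \cref{alg:column_wise_aggregation,alg:encrypted_aggregation} compose under \cref{eq:fusion_plain_cipher} into the intended weighted average rather than a double count, while also confirming consistency with FedAvg under client subsampling (\cref{subsec:adaptive_agg}). That bookkeeping is the delicate part; the permutation and partition arguments are routine.
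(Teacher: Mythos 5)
Your proposal is correct and follows essentially the same route as the paper's own proof: split each client's update into its plaintext and ciphertext column blocks, aggregate the two sides column-wise, and merge them through the block factorization $\mathbf{B_g A_g}=\mathbf{B_p A_p}+\mathbf{B_c A_c}$ of \cref{eq:fusion_plain_cipher}, so every column lands in exactly one accumulator and nothing is dropped — your version simply spells out the index bookkeeping (permutation inverse, coverage of $\{1,\dots,n\}$ by the $K$- and $K^{*}$-wide accumulators) that the paper's proof leaves implicit. The ``overlap band'' normalization issue you flag concerns the FedAvg-consistency of the weighting, not the theorem's claim that no update is lost, and the paper's proof does not address it either.
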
 

\begin{proof}
Suppose that for clients $1$ to $N$, their encryption budgets are $\gamma_i$ ($\gamma_1\leq\gamma_2\cdots\leq\gamma_N$), and the hidden size of the model is $n$ (i.e., number of columns in the parameter matrix).
Then the numbers of encrypted columns of the clients are $k_1=n\times\gamma_1\leq k_2=n\times\gamma_2\cdots\leq k_N=n\times\gamma_N$.

From \cref{subsec:adaptive_agg}, all $k_i$ encrypted columns are integrated in $\Delta \overline{\mathbf{W}}_\text{cipher}$, while plaintext columns are integrated in $\Delta \overline{\mathbf{W}}_\text{plain}$.
According to \cref{sec:repara}, $\Delta \overline{\mathbf{W}}_\text{plain}\overset{\text{SVD}}{=} \mathbf{B}_p \mathbf{A}_p$ and $\Delta \overline{\mathbf{W}}_\text{cipher}\overset{\text{SVD}}{=} \mathbf{B}_c \mathbf{A}_c$, respectively. 
Following \cref{eq:fusion_plain_cipher}, all meaningful updates are integrated in LoRA matrices as $\mathbf{B}=\begin{bmatrix} \mathbf{B}_ p& \mathbf{B}_ c\end{bmatrix}$ and $\mathbf{A}=\begin{bmatrix} \mathbf{A}_p& \mathbf{A}_c\end{bmatrix}^ {\top}$. 
Finally, the weight update for each client can be calculated as $\Delta \overline{\mathbf{W}}=\Delta \overline{\mathbf{W}}_\text{plain}+\Delta \overline{\mathbf{W}}_\text{cipher}=\begin{bmatrix} \mathbf{B}_p& \mathbf{B}_c \end{bmatrix}\begin{bmatrix} \mathbf{A}_p&  \mathbf{A}_c \end{bmatrix}^ {\top}=\mathbf{BA}$.
Thus, whether a column is encrypted (in $\Delta \overline{\mathbf{W}}_\text{cipher}$) or not (in $\Delta \overline{\mathbf{W}}_\text{plain}$), it will always be integrated into the aggregated model ($\Delta \overline{\mathbf{W}}=\mathbf{BA}$). 
\end{proof}

\subsection{Distribution Shift of Model Parameter Importance Values}\label{appendix:distribution_shift}

To determine whether the distribution of model parameter importance values will shift during training, we conduct 50 rounds of FL training on the Natural-Instructions~\citep{wang2022super} dataset under Non-IID conditions with the Dirichlet distribution parameter $\rho=0.3$.
\cref{fig:multi_wanda_sensivity} illustrates the variation of the distribution of channel-wise importance values along with the progress of FL training.
We can see that the specific importance values do change slightly, but their relative ranking remains almost unchanged as compared to \cref{fig:importance}.
Considering that the parameter importance in SHE-LoRA is assessed via channel-wise summation of sensitivity values, the slight change of model parameter importance distribution has minor impact on performance.

\begin{figure}[htbp]
    \centering
    \vspace{-0.5in}
    \includegraphics[width=\linewidth]{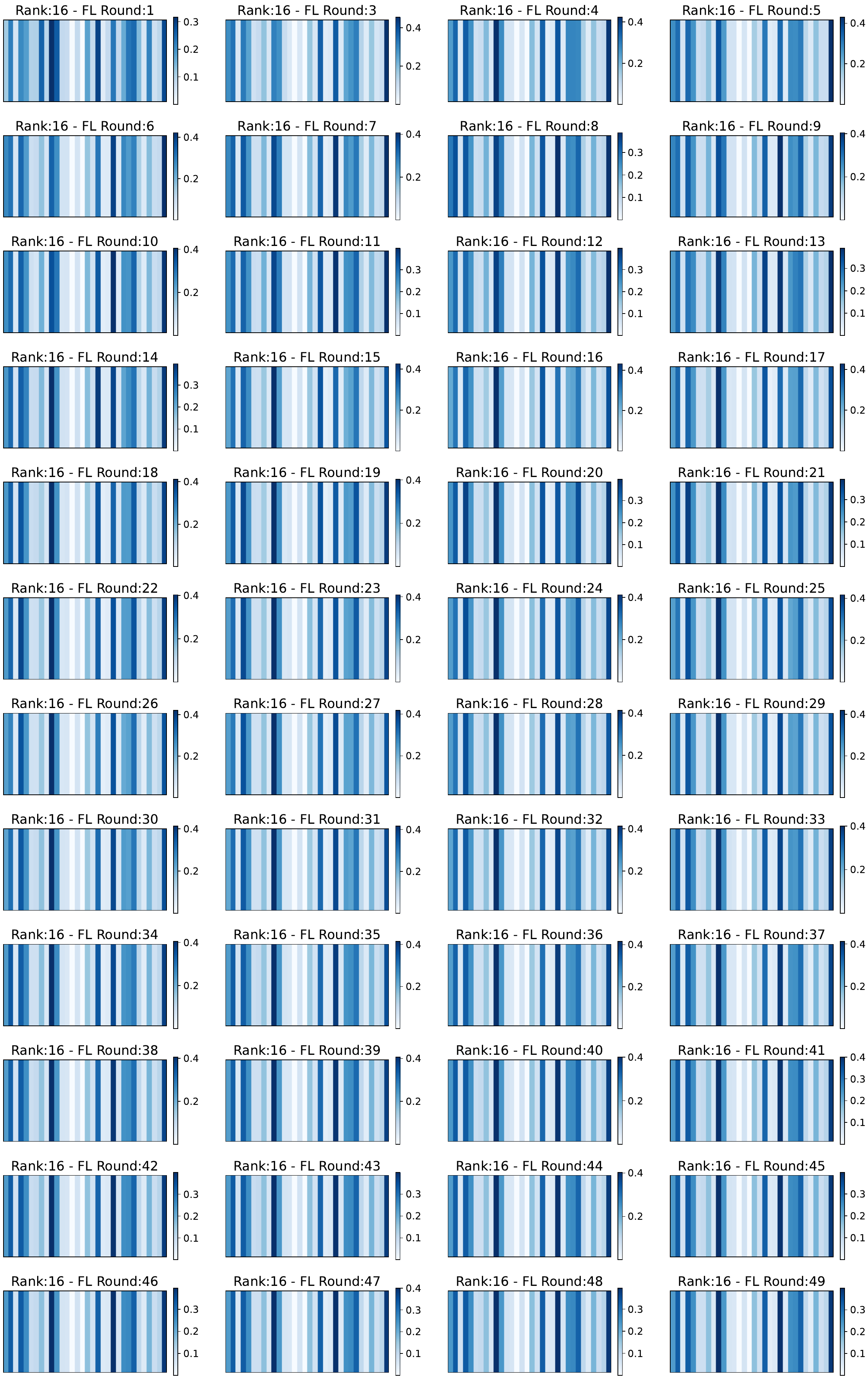}
    \caption{Distribution of model parameter importance values across FL rounds.}
    \label{fig:multi_wanda_sensivity}
\end{figure}

Moreover, considering that extreme cases (e.g., dynamic data change) may occur, especially under Non-IID settings, the negotiation of HE subsets can be executed periodically depending on the clients' tolerance to the change of model parameter importance distribution.
Specifically, the theoretical costs of negotiation and training per layer on a client are listed in \cref{tb:negotiation}.

\begin{table}[h]
    \centering
    \small
    \caption{Theoretical costs of negotiation and training per layer.}
    \label{tb:negotiation}
    \begin{tblr}{
        colspec = { X[l,m,0.2\textwidth]|X[l,m]|X[l,m,0.3\textwidth] },
        stretch = 0,
    }
    \hline
        ~ & Communication & Computation \\ \hline
        Negotiation $\times$ $N^r$ & 4 Bytes $\times$ hidden size $\times$ ratio $\times$  $N^r$ & Forward $\times$ $N^r$ \\ \hline
        Training $\times$ 1 & 2 Bytes $\times$  rank $\times$  hidden size & Backpropagation $\times$ 1 \\
        \hline
    \end{tblr}
\end{table}

With precision=bf16, encryption ratio=1\% and $r$=16, FL training generally takes $N^r<50$ rounds for convergence.
Even if the negotiation is executed per round, the negotiation communication overhead of 50 rounds is 4 Bytes$\times$6656 hidden size (Llama-30B)$\times$1\%$\times$50=13 KB, which is much smaller than the overall training communication overhead (2 Bytes$\times$16$\times$6656 hidden size (Llama-30B)=213 KB).
However, although the computation cost of a ``Forward'' is much lower than that of a ``Backpropagation'', the computation cost of N rounds of ``Forward'' will gradually increase along with the training progress.
Therefore, the clients can choose the negotiation period according to their expected balance between model parameter importance update timeliness and computation cost.

\subsection{Proof of Asymptotic Gaussian-distributed Noise}\label{appendix:gaussian_noise}

Let $\mathbf{G} = [g_1, \dots, g_n] \in \mathbb{R}^{r \times n} $ be the gradient matrix from LoRA fine-tuning, with columns $g_k \in \mathbb{R}^{r} $, and let $\mathbf{P}$ be the permutation matrix corresponding to a uniform random permutation $\pi(\cdot)$. 
We define the noise matrix as $\mathbf{\Theta} = \mathbf{G}(\mathbf{P} - \mathbf{I}) $, so that the noise added on each gradient element is calculated as $ \sigma_{i,k} = g_{i,\pi(k)} - g_{i,k} $.
For any fixed linear query matrix $ \mathbf{Q} \in \mathbb{R}^{r \times n} $, the query output is:

\vspace{-0.1in}
\begin{equation}
    O_{\mathbf{Q}}(\mathbf{\Theta}) = \langle \mathbf{Q}, \mathbf{\Theta} \rangle = \sum_{i=1}^r \sum_{k=1}^n \mathbf{Q}_{i,k} (g_{i,\pi(k)} - g_{i,k}).
\end{equation}

The randomness comes solely from the permuted sum $ \sum_{i,k} \mathbf{Q}_{i,k} g_{i,\pi(k)} $, which is a combinatorial statistic of the form studied in \citep{hoeffding1951combinatorial}.
Therefore, by combinatorial central limited theorem \citep{hoeffding1951combinatorial} and Berry–Esseen bound \citep{bolthausen1984estimate}, we have

\vspace{-0.1in}
\begin{equation}
    \frac{O_\mathbf{Q}(\mathbf{\Theta})}{\sqrt{\operatorname{Var}[O_\mathbf{Q}(\mathbf{\Theta})]}} \xrightarrow{\text{dist}} \mathcal{N}(0,1),
    \quad
    \sup_{x \in \mathbb{R}} \left| \mathcal{P}\left( \frac{O_\mathbf{Q}(\mathbf{\Theta})}{\sqrt{\operatorname{Var}[O_\mathbf{Q}(\mathbf{\Theta})]}} \le x \right) - \Phi(x) \right| = O\left(\frac{1}{\sqrt{n}}\right).
\end{equation}

The first statement means that the response to any fixed linear query is asymptotically Gaussian, and the second quantifies the approximation error as $O(1/\sqrt{n})$.
Moreover, by the classical variance formula for permuted linear statistics \citep{hoeffding1951combinatorial,hajek1961some}, we have

\begin{equation}
    \operatorname{Var}[O_\mathbf{Q}(\mathbf{\Theta})] = \sum_{i=1}^r \frac{1}{n-1} \left( \sum_{k=1}^n (\mathbf{Q}_{i,k} - \bar{\mathbf{Q}}_i)^2 \right) \left( \sum_{k=1}^n (g_{i,k} - \bar{g}_i)^2 \right),
\end{equation}
where $ \bar{g}_i = \frac{1}{n} \sum_{k=1}^n g_{i,k} $ and $ \bar{\mathbf{Q}}_i = \frac{1}{n} \sum_{k=1}^n \mathbf{Q}_{i,k} $.

For fixed $\mathbf{G}$ and $\mathbf{Q}$, we define $s^2$ as the variance of $O_\mathbf{Q}(\mathbf{\Theta})$.
Since this asymptotic normality holds for every fixed query matrix $ Q $, the Cramér–Wold device \citep{cramer1936some} implies that the noise matrix $ \mathbf{\Theta} $, viewed as a random vector in $ \mathbb{R}^{rn} $, converges in distribution to a zero-mean multivariate Gaussian with the same covariance structure as $ \mathbf{\Theta} $. 
Consequently, $ \mathbf{\Theta} $ behaves like Gaussian noise for all linear queries.

\section{Distributability and Scalability}\label{appd:scalability}

Our system implementation is predicated upon Flower, an open-source FL framework developed by a team at the University of Oxford. 
Flower is designed to streamline the construction of FL systems while affording a high degree of flexibility and scalability. 
It supports a variety of mainstream machine learning frameworks, such as PyTorch, TensorFlow, and Hugging Face Transformers, rendering it suitable for researchers and engineers addressing FL requirements across diverse scenarios. 
Flower allows users to extensively configure the framework according to their specific needs, thereby accommodating various FL scenarios while offering substantial support for AI research. 
Based on Flower, our SHE-LoRA supports parallelized simulation and multi-machine deployment, capable of satisfying the distributed and scalable requirements inherent in real-world applications.

\subsection{Performance on More Clients} \label{appendix:moreclients}

We repeat the experiment of bert-large model with 100, 300, 500, 1000, 2000 clients on the IMDB datasets, which takes 1.58, 4.14, 6.91, 13.8 and 25.3 hours to complete 10 rounds of FL training.
Each client encrypts 0.5\% of OpenLLaMA-3B with the same rank (16).
The means and standard deviations of HE time and communication overhead are listed in \cref{tb:moreclients}.

\begin{table}[h]
    \centering
    \small
    \caption{HE time and communication overhead on varying number of clients.}
    \label{tb:moreclients}
    \begin{tblr}{
        colspec = { X[c,m]|X[c,m]|X[c,m] },
        stretch = 0,
    }
    \hline
        \# of Clients  & HE Time & Communication Overhead \\ \hline
        100 & 53.34$\pm$0.5s & 93.39$\pm$0.1MB \\ 
        300 & 53.57$\pm$0.5s & 93.39$\pm$0.1MB \\ 
        500 & 53.49$\pm$0.6s & 93.39$\pm$0.1MB \\ 
        1000 & 54.23$\pm$0.8s & 93.39$\pm$0.1MB \\ 
        2000 & 54.19$\pm$1.1s & 93.39$\pm$0.1MB \\
        \hline
    \end{tblr}
\end{table}

We find that although the convergence of FL training does slow down along with the increase of clients, thanks to SHE-LoRA's global control over HE subset, the clients' HE time and communication overhead do not significantly inflate even in extreme heterogeneity with >1000 clients.
This means that SHE-LoRA will not delay FL training and scales well with increased number of clients.

\subsection{Performance on Larger LLMs} \label{appendix:largerLLMs}

We deploy SHE-LoRA on larger LLMs including OpenLlama-3B, Llama-3-8B, Llama-30B and Llama-3.1-70B, and analyze its scalability in comparison with the DP baseline.
Specifically, as \cref{subsec:dager}
has confirmed that SHE-LoRA is secure against the DAGER attack as long as more than 0.125\% of the parameters are encrypted, we let each client encrypt 0.125\% of the parameters in the scalability experiments with rank $r$=16. 
In the DP baseline, we let each client add DP noise to parameters with ($\epsilon$, $\sigma$) = ($10$, $10^{-7}$), which is the same as in the DAGER \citep{petrov2024dager} experiments.
The HE key size for OpenLlama-3B and Llama-3-8B is set to 8192.
However, the HE key size for Llama-30B and Llama-70B is set to 16384 as the HE key size of 8192 cannot hold a single column (minimum encryption unit in SHE-LoRA) of LLMs at this scale.
Then, we measure the encryption time, time cost with DP and ciphertext size per client under varying model scales.
The mean and standard deviation of the measured results are shown in \cref{tb:largermodel}.

\begin{table}[h]
    \centering
    \small
    \caption{The costs under varying model scales.}
    \label{tb:largermodel}
    \begin{tblr}{
        width=\textwidth,
        colspec = { X[c,m,0.22\textwidth] X[c,m] X[c,m] X[c,m] X[c,m] },
        stretch = 0,
    }
        \hline
        & OpenLlama-3B & Llama-3-8B & Llama-30B & Llama-3.1-70B \\
        \hline
        \# of Layers & 26 & 32 & 60 & 80 \\
        Hidden Size & 3200 & 4096 & 6656 & 8192 \\
        Encryption Budget & 0.125\% & 0.125\% & 0.125\% & 0.125\% \\
        Encrypted Parameters & 1,664 & 2,624 & 8,040 & 13,120 \\
        Encryption Time (s) & 2.67$\pm$0.32 & 4.46$\pm$0.73 & 148.70$\pm$5.41 & 242.32$\pm$8.72 \\
        HE Key Size & 8192 & 8192 & 16384 & 16384 \\
        Time Cost with DP (s) & 0.0548$\pm$0.001 & 0.0878$\pm$0.005 & 0.2693$\pm$0.028 & 0.4575$\pm$0.061 \\
        Ciphertext Size (MB) & 23.34$\pm$0.00 & 35.91$\pm$0.01 & 289.37$\pm$0.01 & 385.84$\pm$0.03 \\
        \hline
    \end{tblr}
\end{table}

When HE key size is 8192, 
for OpenLlama-3B, encryption time per parameter=2.67/1664=0.0016 s, ciphertext size per parameter=23.34/1664=0.0140 MB;
for Llama-3-8B, encryption time per parameter=4.46/2624=0.0017 s, ciphertext size per parameter=35.91/2624=0.0137 MB.

When HE key size is 16384, 
for Llama-30B, encryption time per parameter=148.70/8040=0.0185 s, ciphertext size per parameter=289.37/8040=0.0360 MB;
for Llama-3.1-70B, encryption time per parameter=242.32/13120=0.0184 s, ciphertext size per parameter=385.84/13120=0.0294 MB.

These observations demonstrate that when LLMs are encrypted with the same level of HE key size, the encryption time and ciphertext size scale almost linearly with the increase of LLM scale.
Although the time cost with DP is much lower than that of SHE-LoRA, DP may significantly degrade model accuracy \citep{sun2024improving}, and is vulnerable against inversion attacks under low-noise-level settings (as shown in Table 10 of \citep{petrov2024dager} and \cref{tb:dager})]).

\subsection{Performance on Stronger Base Models and more Challenging Benchmarks} \label{appendix:stronger_base_bench}

We conduct the fine-tuning of Qwen3-4B-Instruct-2507~\footnote{\url{https://huggingface.co/Qwen/Qwen3-4B-Instruct-2507}} and Llama-3.2-3B~\footnote{\url{https://huggingface.co/meta-llama/Llama-3.2-3B}} with SHE-LoRA and Vanilla LoRA on the PILE dataset~\footnote{\url{https://huggingface.co/datasets/iamgroot42/mimir}}, and evaluate the performance of the fine-tuned models on six benchmarks: MMLU-Pro~\footnote{\url{https://huggingface.co/datasets/TIGER-Lab/MMLU-Pro}}, GPQA~\footnote{\url{https://github.com/idavidrein/gpqa}}, MuSR~\footnote{\url{https://github.com/Zayne-sprague/MuSR}}, MATH~\footnote{\url{https://huggingface.co/datasets/nlile/hendrycks-MATH-benchmark}}, IFEval~\footnote{\url{https://huggingface.co/datasets/google/IFEval}}, and BBH~\footnote{\url{https://github.com/suzgunmirac/BIG-Bench-Hard}}.
The collected results of SHE-LoRA, Vanilla LoRA and the base model without fine-tuning are shown in \cref{tab:more_bench_results}.

The results demonstrate that SHE-LoRA preserves the original LoRA performance while providing privacy via selective HE.
This benefit generalizes across stronger base model families and diverse tasks, provided that federated LoRA is used for PEFT.

\begin{table}
\caption{Performance comparison of Qwen-3 and Llama-3.2 across more challenging benchmarks.}
\label{tab:more_bench_results}
\centering
\small
\begin{tblr}{
  width=\linewidth,
  colspec = { X[c,m,0.13\textwidth]|X[c,m,0.13\textwidth]|X[c,m,0.12\textwidth] X[c,m] X[c,m] X[c,m] X[c,m] X[c,m] },
  stretch = 0,
  cell{2}{1} = {r=3}{},
  cell{5}{1} = {r=3}{},
  cell{2,5}{2-8} = {green!15},
  hline{1-2,5,8} = {-}{},
}
Method       & Method       & MMLU-Pro & GPQA  & MuSR  & MATH  & IFEval & BBH   \\
Qwen3-4B   & SHE-LoRA     & 47.36    & 43.94 & 30.71 & 78.86 & 67.34  & 63.26 \\
             & Vanilla LoRA & 48.54    & 42.57 & 32.31 & 77.63 & 68.06  & 64.58 \\
             & Base         & 65.36    & 45.00 & 61.67 & 84.00 & 90.17  & 85.93 \\
Llama-3.2-3B & SHE-LoRA     & 14.86    & 13.64 & 17.86 & 21.74 & 27.24  & 9.26  \\
             & Vanilla LoRA & 14.62    & 13.87 & 18.05 & 20.96 & 27.37  & 9.38  \\
             & Base         & 14.29    & 11.11 & 30.95 & 16.68 & 27.99  & 10.85 
\end{tblr}
\end{table}

\section{Additional Experimental Results}\label{appd:results}

\subsection{Performance on Varying Tasks}\label{appendix:vartasks}
We employ FedIT~\citep{zhang2024fedit} and FedSA~\citep{guo2024fedsa} as baseline methods under homogeneous settings (rank $r$=8).
FedIT averages LoRA weights across clients, limiting the rank according to the capability of the weakest device.
FedSA trains matrix $\mathbf{B}$ locally while aggregating matrix $\mathbf{A}$ globally, leveraging FL to enhance the representation capacity of LoRA.
Moreover, we employ FLoRA~\citep{wang2024flora}, HeterLoRA~\citep{cho2023heterogeneous} and Flex-LoRA~\citep{bai2024flexlora} as baselines under heterogeneous settings.
FLoRA utilizes stacking to reduce full-weight computation and achieve precise averaging across heterogeneous LoRA updates, but at the cost of an expanded parameter space.
HeterLoRA zero-pads all LoRA matrices to the global maximum rank, applies weight-averaged aggregation similar to FedAvg, and subsequently truncates the aggregated weights to align with the local rank of each client.
However, zero-padding introduces additional dilution in the aggregated parameters, which in turn leads to degraded model performance.
Flex-LoRA reconstructs the full parameter matrix for each client by computing $B \times A$ and performs aggregation.
Subsequently, the aggregated matrix is decomposed using SVD and truncated according to the client's LoRA rank, producing a low-rank parameter matrix.

\subsubsection{Results on NLP Tasks}

\textbf{Natural Language Generation}:
According to the results in~\cref{tb:nlg}, FedIT and FedSA perform the worst on the MMLU Benchmark, obtaining scores of 21.2 and 20.1, respectively.
These results indicate the limitations of traditional homogeneous approaches in heterogeneous LoRA settings, where the inability to effectively utilize client-specific information hinders overall performance.
While HeterLoRA integrates parameters from heterogeneous devices to improve performance, its reliance on zero-padding leads to parameter dilution, resulting in inferior performance compared to Flex-LoRA.
SHE-LoRA achieves the highest scores on STEM, Social Sciences(SS) and the overall Average, and matches Flex-LoRA's performance on Humanities. 
Both methods outperform all other baselines by a significant margin.
These results indicate that SHE-LoRA better preserves informative updates in heterogeneous generative tasks, leading to improved generalization and performance.

\begin{table}[htbp]
    \centering
    \small
\caption{Performance on the MMLU benchmark.}
\label{tb:nlg}
    \begin{tblr}{
        colspec = { X[c,m,0.3\textwidth] X[c,m] X[c,m] X[c,m] X[c,m] },
        stretch = 0,
        row{Z} = {green!15}
    }
\hline
\textbf{Method} & \textbf{STEM} & \textbf{SS} & \textbf{Humanities}    & \textbf{Average} \\ 
\hline
FedIT~\citep{zhang2024fedit} & 21.5 & 21.3 & 20.4 & 21.2 \\
FedSA~\citep{guo2024fedsa} & 21.8 & 21.4 & 19.7 & 20.1 \\
HeterLoRA~\citep{cho2023heterogeneous} & 24.7 & 25.4 & 25.8 & 26 \\
Flex-LoRA~\citep{bai2024flexlora} & 26.2 & 27.9 & \textbf{26.6} & 27.4 \\
SHE-LoRA & \textbf{28.1} & \textbf{29.2} & 26.5 & \textbf{28.2} \\
        \hline
    \end{tblr}
\end{table}

\textbf{Natural Language Understanding}:
Similarly, we reviewed on the six datasets of GLUE Benchmark in~\cref{tb:nlu}, and the performances of FedIT and FedSA reaffirmed the limitations of traditional aggregation methods in heterogeneous scenarios. 
Flex-LoRA and SHE-LoRA, on the other hand, outperform the other methods, demonstrating that SHE-LoRA can more effectively update model parameters in heterogeneous environments while achieving performance comparable to non-private methods.
Unsurprisingly, HeterLoRA achieves better performance than homogeneous baselines. 
However, it lags behind Flex-LoRA and SHE-LoRA, primarily due to the performance degradation caused by parameter dilution.

\begin{table}[htbp]
    \centering
    \small
\caption{Performance on the GLUE benchmark.}
\label{tb:nlu}
    \begin{tblr}{
        colspec = { X[c,m,0.3\textwidth] X[c,m] X[c,m] X[c,m] X[c,m] X[c,m] X[c,m] },
        stretch = 0,
        row{Z} = {green!15}
    }
\hline
\textbf{Method}    & \textbf{SST2}  & \textbf{MRPC}  & \textbf{QQP}   & \textbf{RTE}   & \textbf{WNLI}  & \textbf{QNLI}   \\ 
\hline
FedIT~\citep{zhang2024fedit}     & 47.41 & 31.62 & 64.71 & 43.07 & 46.34 & 48.87  \\
FedSA~\citep{guo2024fedsa}     & 48.23 & 33.71 & 66.32 & 43.56 & 48.27 & 48.26  \\
HeterLoRA~\citep{cho2023heterogeneous} & 55.73 & 68.38 & 72.17 & 44.72 & 48.86 & 49.14  \\
Flex-LoRA~\citep{bai2024flexlora} & 52.29 & \textbf{74.81} & \textbf{75.31} & 46.93 & 49.66 & 49.51  \\
SHE-LoRA  & \textbf{57.11} & 70.88 & 72.52 & \textbf{50.18} & \textbf{57.75} & \textbf{59.63}  \\
        \hline
    \end{tblr}
\end{table}

SHE-LoRA demonstrates strong performance across both benchmarks, achieving SOTA results in heterogeneous settings while maintaining optimal performance despite the integration of privacy-preserving mechanisms.

\begin{table}[htbp]
    \centering
    \small
\caption{Performance comparison on 5 vision tasks.}
\label{tb:vision_results}
    \begin{tblr}{
        colspec = { X[c,m,0.3\textwidth] X[c,m] X[c,m] X[c,m] X[c,m] X[c,m] X[c,m] },
        stretch = 0,
        row{1,2} = {font=\bfseries},
        row{8,14} = {green!15}
    }
\hline
& \SetCell[c=6]{c,bg=white} Datasets \\ 
Method & MNIST & DTD   & EuroSAT & GTSRB & SVHN  & AVG    \\ 
\hline
\SetCell[c=7]{c,bg=white} \textit{Clip-Vit-Base-Patch-16}\quad  \text{r} = 8 \\ 
\hline
FedIT~\citep{zhang2024fedit}     & 93.38          & 68.74          & 93.17          & 83.62          & 90.43          & 85.87           \\
FedSA~\citep{guo2024fedsa}     & 93.13          & 67.51          & 94.23          & 85.12          & 88.49          & 85.69           \\
HeterLoRA~\citep{cho2023heterogeneous} & 95.37          & 68.83          & 96.22          & 87.18          & 91.55          & 87.83           \\
Flex-LoRA~\citep{bai2024flexlora} & 99.28          & \textbf{70.32} & \textbf{98.48} & 95.74          & \textbf{95.37} & \textbf{91.84}  \\
SHE-LoRA  & \textbf{99.33} & 69.97          & 98.35          & \textbf{95.88} & 95.13          & 91.73           \\ 
\hline
\SetCell[c=7]{c,bg=white} \textit{Clip-Vit-Base-Patch-16}\quad  \text{r} = 16 \\ 
\hline
FedIT~\citep{zhang2024fedit}     & 95.36          & 68.85          & 94.56          & 85.37          & 91.58          & 87.14           \\
FedSA~\citep{guo2024fedsa}     & 94.62          & 67.92          & 95.18          & 87.23          & 90.67          & 87.12           \\
HeterLoRA~\citep{cho2023heterogeneous} & 94.56          & 68.21          & 96.77          & 89.62          & 92.28          & 88.29           \\
Flex-LoRA~\citep{bai2024flexlora} & \textbf{99.30} & 70.05          & \textbf{98.29} & \textbf{95.45} & 95.15          & 91.65           \\
SHE-LoRA  & 99.25          & \textbf{70.85} & 98.22          & 95.35          & \textbf{96.03} & \textbf{91.94}  \\
        \hline
    \end{tblr}
\end{table}

\subsubsection{Results on Vision Tasks}

We apply CLIP \citep{radford2021learning} as the basic pre-trained model for vision tasks, a multimodal model that mixes visual model and language model. 
Specifically, we load the Clip-Vit-Base-Patch-16 model from huggingface\footnote{\url{https://huggingface.co/openai/clip-vit-base-patch16}} and fine-tune its visual model, and conduct experiments on five visual classification tasks, which are MNIST \citep{lecun2002gradient}, DTD \citep{cimpoi2014describing}, EuroSAT \citep{helber2019eurosat}, GTSRB \citep{stallkamp2012man}, SVHN \citep{netzer2011reading}.
We conduct FL training for 10 rounds on each task, and set that each client has the same LoRA rank.

The results are shown in \cref{tb:vision_results}.
The highest accuracy (\%) for each task is highlighted in \textbf{blod}.
At rank $r=8$, SHE-LoRA achieves a comparable average accuracy (91.73\%) to that of Flex-LoRA (91.84\%), while outperforming FedIT, FedSA and HeterLoRA.
At the rank of $r=16$, SHE-LoRA can even achieve the best average accuracy (91.94\%).
The results indicate that the privacy protection mechanism of SHE-LoRA will not lead to significant performance degradation.

\subsection{Robustness under Varying Non-IID Conditions}\label{appendix:non-iid}

The results of SHE-LoRA in Tables \ref{tb:nlg} and \ref{tb:nlu} are collected under the Dirichlet distribution with parameter $\rho$=0.3, which confirm that SHE-LoRA achieves comparable performance to a SOTA non-private Federated PEFT method (Flex-LoRA) on various benchmarks under Non-IID conditions.

To further validate the robustness of SHE-LoRA under varying Non-IID conditions, we conduct more experiments on the natural-instructions dataset with $\rho$ set to 0.1, 0.5, 1 and 10, respectively.
A smaller $\rho$ indicates a greater Non-IID degree among clients.
The experiment is repeated for 10 rounds under each $\rho$ value.
The mean, standard deviation of model accuracies collected on the MMLU Benchmark are shown in 
\cref{tb:robustness} ($\uparrow$ means that higher accuracy is better):

\begin{table}[h]
    \centering
    \small
    \caption{MMLU benchmark under varying Non-IID conditions.}
    \label{tb:robustness}
    \begin{tblr}{
        colspec = { X[c,m] X[c,m] X[c,m] X[c,m] X[c,m] },
        stretch = 0,
    }
    \hline
        $\rho$ & STEM$\uparrow$ & SS$\uparrow$ & Humanities$\uparrow$ & Average$\uparrow$ \\ \hline
        0.1 & 24.8$\pm$0.00 & 25.5$\pm$0.35 & 25.4$\pm$0.15 & 25.9$\pm$0.09 \\ 
        0.5 & 24.7$\pm$0.15 & 25.6$\pm$0.46 & 25.4$\pm$0.21 & 25.8$\pm$0.21 \\ 
        1 & 24.7$\pm$0.51 & 25.4$\pm$0.11 & 25.3$\pm$0.25 & 25.8$\pm$0.06 \\ 
        10 & 24.8$\pm$0.21 & 25.4$\pm$0.12 & 25.5$\pm$0.11 & 25.8$\pm$0.10 \\
        \hline
    \end{tblr}
\end{table}

We can see that no matter how Non-IID the clients' data is, the models trained with SHE-LoRA can achieve stable performance across clients, which validates the robustness of SHE-LoRA under varying Non-IID conditions.

\subsection{Efficient Estimation of Mutual Information} \label{appendix:kde_matual_information}
As described in \cref{subsec:mi_defination}, mutual information measures the amount of information shared between two variables. 
According to \cref{eq:mi}, evaluating the mutual information requires knowledge of $p(x)$, $p(y)$ and the joint density $p(x,y)$, yet in practice we have only samples and not the true densities. 
The simplest empirical approach is a histogram (binning) estimator, which partitions the space and counts frequencies.
However, histograms require large sample sizes and are sensitive to the choice of binning. 
A more stable nonparametric approach is to employ kernel density estimators (KDE) \citep{moon1995kdeestimation}. 
Concretely, flatten the parameter matrices $\mathbf{W}$ and $\mathbf{W}_{-\mathbf{w}}$ into one-dimensional collections and treat the corresponding elements as paired samples $ \{(x_i,y_i)\}_{i=1}^N$ .
In this step, the marginal distribution of $p(x)$, $p(y)$ and the joint density $p(x,y)$ are estimated by kernel density estimation (KDE), which constructs a smooth probability density function by centering kernel functions (e.g., Gaussian) at each sample point and aggregating them with an appropriate bandwidth.
Once these probability density variables are obtained, they are substituted into \cref{eq:mi} to compute the final mutual information. 
The code for the mutual information calculation is given as follows:

\begin{lstlisting}
from sklearn.neighbors import KernelDensity
def kde_mutual_info(X_flat, Y_flat, bandwidth=0.2):
    X_flat = X_flat.flatten()
    Y_flat = Y_flat.flatten()
    n = len(X_flat)
    sample_num = min(10000, n)
    sample_points = np.random.choice(n, sample_num, replace=False)
    X_sample = X_flat[sample_points].reshape(-1, 1)
    Y_sample = Y_flat[sample_points].reshape(-1, 1)
    XY_sample = np.hstack([X_sample, Y_sample])
    kde_x = KernelDensity(bandwidth=bandwidth).fit(X_sample)
    kde_y = KernelDensity(bandwidth=bandwidth).fit(Y_sample)
    kde_xy = KernelDensity(bandwidth=bandwidth).fit(XY_sample)
    log_px = kde_x.score_samples(X_sample)
    log_py = kde_y.score_samples(Y_sample)
    log_pxy = kde_xy.score_samples(XY_sample)
    return np.mean(log_pxy - log_px - log_py)
\end{lstlisting}

\subsection{Resistance against Membership Inference Attacks}\label{appendix:mia}

We fine-tune the base model Qwen3-4B-Instruct-2507 on the PILE dataset using standard LoRA (denoted as \emph{Vanilla LoRA} in tables) and SHE-LoRA with varying encryption ratios $\gamma$, respectively.
Then, we implement seven membership inference attacks (MIAs)(Loss~\citep{carlini2021extracting}, Lowercase~\citep{carlini2021extracting}, Zlib~\citep{carlini2021extracting}, Min-k (0.1)~\citep{shidetecting}, Min-k (0.5)~\citep{shidetecting}, Recall~\citep{xie2024recall} and PAC~\citep{ye2024data}) on the base model (denoted as \emph{Base} in tables) and the fine-tuned models of Vanilla LoRA and SHE-LoRA.
Under SHE-LoRA, attackers can only launch MIAs based on unencrypted parameters.
The attack results are reported in AUROC with~\cref{tab:mia_auc}, FPR@95 with~\cref{tab:mia_fpr95} and TPR@5 with~\cref{tab:mia_tpr5}. 

\begin{table}[htbp]
\centering
\small
\caption{The AUROC results reported under 7 membership inference attacks.}
\label{tab:mia_auc}
\begin{tblr}{
  width=\linewidth,
  colspec = { X[c,m,0.13\textwidth]|X[c,m] X[c,m] X[c,m] X[c,m,0.11\textwidth] X[c,m,0.11\textwidth] X[c,m] X[c,m] },
  stretch = 0,
  hline{1-2,9} = {-}{},
  row{4-8} = {green!15},
}
Model              & Loss   & Lowercase & Zlib  & Min-k (0.1) & Min-k (0.5) & Recall & PAC   \\
Base               & 50.9\% & 48.4\%    & 50.2\% & 50.5\%      & 50.9\%      & 50.1\% & 51.2\% \\
Vanilla LoRA           & 81.4\% & 80.5\%    & 76.7\% & 80.9\%      & 82.9\%      & 73.8\% & 83.3\% \\
$\gamma=1\text{‰}$ & 62.6\% & 62.8\%    & 60.3\% & 62.5\%      & 63.5\%      & 64.7\% & 65.0\% \\
$\gamma=1\%$       & 56.8\% & 57.7\%    & 55.4\% & 56.5\%      & 57.3\%      & 58.4\% & 58.4\% \\
$\gamma=5\%$       & 54.1\% & 55.2\%    & 53.1\% & 53.7\%      & 54.3\%      & 55.8\% & 55.3\% \\
$\gamma=10\%$      & 56.8\% & 57.7\%    & 55.4\% & 56.5\%      & 57.3\%      & 58.4\% & 58.4\% \\
$\gamma=20\%$      & 52.4\% & 53.2\%    & 51.7\% & 52.1\%      & 52.5\%      & 53.1\% & 53.3\% 
\end{tblr}
\end{table}

In the AUROC results, \emph{Base} performs no better than random guessing (with AUROC$\approx$50\%), confirming that the pretraining corpus does not include the evaluation data. 
In contrast, Vanilla LoRA achieves much higher AUROC results across all attacks, indicating substantial membership leakage after fine-tuning.
Remarkably, compared with Vanilla LoRA, SHE-LoRA reduces the average MIA success rate by 21.0\% with an encryption ratio as low as $\gamma = 1\text{\textperthousand}$.
With the increasing of $\gamma$ (e.g., to 1\%), attack success rates further drop by 20.9\%~30.9\%, resulting in nearly random-guessing performance and demonstrating significantly stronger privacy protection.

\begin{table}[htbp]
\centering
\small
\caption{The FPR@95 results reported under 7 membership inference attacks.}
\label{tab:mia_fpr95}
\begin{tblr}{
  width=\linewidth,
  colspec = { X[c,m,0.13\textwidth]|X[c,m] X[c,m] X[c,m] X[c,m,0.11\textwidth] X[c,m,0.11\textwidth] X[c,m] X[c,m] },
  stretch = 0,
  hline{1-2,9} = {-}{},
  row{4-8} = {green!15},
}
Model              & Loss   & Lowercase & Zlib   & Min-k (0.1) & Min-k (0.5) & Recall & PAC    \\
Base               & 94.1\% & 95.4\%    & 96.1\% & 96.1\%      & 95.2\%      & 95.5\% & 95.7\% \\
Vanilla LoRA       & 66.5\% & 63.3\%    & 88.8\% & 71.4\%      & 66.2\%      & 79.2\% & 72.1\% \\
$\gamma=1\text{‰}$ & 89.7\% & 86.9\%    & 93.9\% & 90.8\%      & 90.3\%      & 90.5\% & 89.9\% \\
$\gamma=1\%$       & 92.4\% & 90.7\%    & 95.2\% & 93.8\%      & 93.4\%      & 91.6\% & 93.1\% \\
$\gamma=5\%$       & 93.2\% & 92.1\%    & 95.5\% & 94.5\%      & 94.1\%      & 93.1\% & 93.8\% \\
$\gamma=10\%$      & 92.4\% & 90.7\%    & 95.2\% & 93.8\%      & 93.4\%      & 91.6\% & 93.1\% \\
$\gamma=20\%$      & 93.2\% & 93.5\%    & 95.6\% & 95.4\%      & 95.1\%      & 94.0\% & 94.8\% 
\end{tblr}
\end{table}

\begin{table}[htbp]
\centering
\small
\caption{The TPR@5 results reported under 7 membership inference attacks.}
\label{tab:mia_tpr5}
\begin{tblr}{
  width=\linewidth,
  colspec = { X[c,m,0.13\textwidth]|X[c,m] X[c,m] X[c,m] X[c,m,0.11\textwidth] X[c,m,0.11\textwidth] X[c,m] X[c,m] },
  stretch = 0,
  hline{1-2,9} = {-}{},
  row{4-8} = {green!15},
}
Model              & Loss   & Lowercase & Zlib   & Min-k (0.1) & Min-k (0.5) & Recall & PAC    \\
Base               & 8.6\%  & 5.6\%     & 7.4\%  & 5.5\%       & 8.0\%       & 5.1\%  & 9.9\%  \\
Vanilla LoRA       & 35.7\% & 35.6\%    & 39.7\% & 39.4\%      & 41.8\%      & 25.4\% & 53.5\% \\
$\gamma=1\text{‰}$ & 12.5\% & 13.0\%    & 15.2\% & 14.0\%      & 14.5\%      & 15.4\% & 17.4\% \\
$\gamma=1\%$       & 9.7\%  & 9.3\%     & 10.8\% & 9.1\%       & 10.2\%      & 10.3\% & 15.0\% \\
$\gamma=5\%$       & 9.0\%  & 7.4\%     & 9.1\%  & 7.3\%       & 9.3\%       & 7.7\%  & 12.4\% \\
$\gamma=10\%$      & 9.7\%  & 9.3\%     & 10.8\% & 9.1\%       & 10.2\%      & 10.3\% & 11.9\% \\
$\gamma=20\%$      & 8.2\%  & 6.7\%     & 8.5\%  & 6.5\%       & 8.2\%       & 6.2\%  & 11.1\% 
\end{tblr}
\end{table}

Consistent with the AUROC results, even at $\gamma = 1\text{\textperthousand}$, SHE-LoRA achieves an average FPR@95 of 90.27\% and an average TPR@5 of 14.57\%, closely comparable to the base model's performance (FPR@95=95.44\%, TPR@5=7.16\%).
In contrast, Vanilla LoRA is significantly more vulnerable, with an average FPR@95 of 72.50\% and TPR@5 of 38.73\%.
These results demonstrate that SHE-LoRA preserves membership privacy during fine-tuning: even under a very small encryption ratio, attackers can only achieve performance close to random guessing, with negligible advantage in distinguishing members from non-members.

In summary, experiments on Qwen3-4B-Instruct-2507 across seven MIAs demonstrate that SHE-LoRA is consistently robust. 
This stems from two key design features: 1) selective encryption of the most sensitive parameter columns prevents direct leakage of privacy-critical information, and 2) column-wise position obfuscation, similar to injecting structured perturbations (see~\cref{appendix:gaussian_noise}), increases uncertainty for attackers.
These mechanisms also mitigate property inference and reconstruction attacks that leverage auxiliary priors, as they obscure the very gradients or parameters these attacks typically exploit.

\subsection{Impact of Sensitive Parameters on Performance}\label{appendix:ablation}

As theoretically established in~\cref{appd:rationale}, parameter sensitivity is closely linked to privacy risk. 
To empirically validate this connection, we conduct experiments on the PILE dataset using the Qwen3-4B-Instruct-2507 model.
Specifically, we fine-tune the model on the training set and evaluate text generation quality via perplexity (PPL) on the validation set.
A higher PPL indicates poorer adaptation to the target domain.
We compare three settings \cref{tab:ppl_comparison}: (i) the original base model (denoted as \emph{Base}), (ii) a raw LoRA-finetuned model (denoted as \emph{Vanilla LoRA}), and (iii) SHE-LoRA with varying encryption ratios $\gamma$. 
This allows us to assess whether protecting high-sensitivity parameters, rather than removing or ignoring them, preserves model utility while enhancing privacy.

SHE-LoRA reports two PPL metrics: the value on the left of ``/'' reflects the model performance when encrypted parameters are masked (i.e., using only unencrypted columns), while the value on the right of ``/'' reflects the model’s true performance without masking parameters.
As expected, \emph{Base} exhibits high PPL due to lack of domain adaptation, whereas \emph{Vanilla LoRA} significantly reduces PPL, confirming effective learning. 
Notably, SHE-LoRA with an encryption ratio of merely 1\text{\textperthousand} already raises the masked PPL to 38.83, indicating that even trivial removal of the most sensitive columns substantially degrades utility. 
In contrast, the PPL result on the right of ``/'' is nearly identical to that of \emph{Vanilla LoRA}, demonstrating sound utility preservation under SHE-LoRA.
Furthermore, as the encryption ratio increases, masked PPL consistently rises, confirming that SHE-LoRA prioritizes the most privacy-sensitive columns.

\begin{table}
\centering
\small
\caption{Model performance comparison with perplexity.}
\label{tab:ppl_comparison}
\begin{tblr}{
  width=\linewidth,
  colspec = { X[c,m]|X[c,m] X[c,m,0.13\textwidth] X[c,m,0.12\textwidth] X[c,m,0.12\textwidth] X[c,m,0.12\textwidth] X[c,m,0.12\textwidth] },
  stretch = 0,
  hlines,
}
Model & Base  & Vanilla LoRA & $\gamma =1\text{\textperthousand}$ & $\gamma =1\%$ & $\gamma = 5\%$ & $\gamma = 10\%$  \\
PPL   & 74.01 & 21.23        & 38.83 / 21.23     & 50.53 / 21.23    & 55.21 / 21.23     & 60.57 / 21.23      
\end{tblr}
\end{table}

\section{Table of Notations}\label{appendix:notations}

\cref{tab:notations} lists the main notations used in this paper.

\begin{table}[htbp]
\centering
\small
\caption{Table of Notations}
\label{tab:notations}
\begin{tblr}{
  colspec = { Q[l, mode=math] | Q[l] },
  row{1} = { font=\bfseries },
  hlines,
  rows = { belowsep=1pt, abovesep=1pt },
  stretch = 0
}
\textbf{Notation} & Description \\
\mathbf{W}\in\mathbb{R}^{m\times n} & Model parameters of a LLM \\
\mathbf{W}_{-w}\in\mathbb{R}^{m\times n} & Model parameters with $w$ zeroed-out \\
\mathbf{W}_0\in\mathbb{R}^{m\times n} & Frozen pre-trained parameters \\
\mathbf{A}\in\mathbb{R}^{r\times n} & Low-rank adapter matrix A of LoRA \\
\mathbf{B}\in\mathbb{R}^{m\times r} & Low-rank adapter matrix B of LoRA \\
\mathbf{X}\in\mathbb{R}^{L\times n} & Input embedding \\
\mathbf{G} & Gradient Matrix \\
\mathcal{L}(\cdot) & Loss function \\
\Omega(\cdot) & Sensitivity computation function \\
\mathcal{S}(\cdot) & Selective HE method \\
\mathcal{R}(\cdot) & Any data reconstruction attack method \\
r & Rank of LoRA adapter \\
L & Number of tokens in an input sequence \\
x_i\in\mathbb{R}^{L} & The $i$-th features in the input \\
I(W;W_{-w}) & Mutual information between $W$ and $W_{-w}$ \\
\gamma_i & Ratio of parameters in client $i$ for encryption \\
k_i & Number of columns in client $i$ for encryption \\
G_i & Group of indices of selected columns in client $i$ \\
S_i & Sensitivities of the columns in $G_i$ on client $i$ \\
b_i & Block $i$ of tensor to be encrypted \\
N^b & Number of tensor blocks to be encrypted \\
pk & Public HE key \\
C_i & Ciphertext of the $i$-th block \\
K & Max columns of unencrypted parameters among clients \\
\end{tblr}
\end{table}

\vspace{-1em}
\section{Limitations}\label{appendix:limitation}
\vspace{-0.8em}

As described in \cref{subsec:threat_model} 
and \cref{appendix:key_management}, SHE-LoRA operates under the assumption of an honest-but-curious server, where all clients share the same HE key. 
Although secure communication channels can be used to defend against malicious clients or collusion between the server and clients, such mechanisms incur higher encryption costs. 
A promising direction for future work is to explore more efficient distributed parameter protection using techniques such as threshold homomorphic encryption, multi-key homomorphic encryption, or proxy re-encryption.

\section{Broader Impact}\label{appendix:borader}

In this work, we leverage parameter sensitivity and SHE to ensure the secure aggregation of federated LoRA against inversion attacks such as DAGER.
Such attacks are able to recover the original client data from clients' updates uploaded during federated PEFT, exacerbating privacy concerns and hindering the possibility of FL to extract value from distributed data.
Our work offers adaptive and sufficient privacy preservation, while minimizing HE overhead per client in cross-device federated PEFT with LoRA.

Importantly, we find that with more sensitive model parameters being encrypted, the mutual information that can be leaked from the model updates drops dramatically, indicating that it is possible to effectively reduce the risk of privacy leakage in terms of privacy information as long as the sensitive model parameters are correctly encrypted.
Our work implies that critical information within the model parameters can be soundly protected against the SOTA attacks by merely encrypting less than 1\% of the model parameters.
Furthermore, we take into account the heterogeneity of the parameter sensitivity and encryption capabilities across clients, and broadly adapt the cost-effective SHE-LoRA to accommodate clients with diverse data distributions and device capabilities.
With these observations, we highlight the feasibility and effectiveness of applying tailored and secure privacy protection for cross-device federated PEFT at much lower overhead compared to existing off-the-shelf privacy protection techniques.

\section{The Use of Large Language Models (LLMs)}\label{appendix:llmusage}
According to the policies on large language model usage at ICLR 2026, we state that LLMs are only used to help with paper writing, including spell checking, grammar checking, and polish writing.

\end{document}